\newcommand{\R}{\mathbb{R}}
\newcommand{\mP}{\mathbb{P}}
\newcommand{\mE}{\mathbb{E}}
\newcommand{\md}{\,{\rm d}}
\newcommand{\s}{\sum\limits}
\newcommand{\one}{1\mkern-5mu{\hbox{\rm I}}}
\newcommand{\E}{\mathbb{E}}
\newcommand{\Prob}{\mathbb{P}}
\newcommand{\td}{\mathrm{d}}
 \newcommand\hp{H_\alpha^\prime}
\newcommand\gqw{G_{w}}
\def\WT{\widetilde}
\theoremstyle{break}
\newtheorem{Def}{Definition}[section]
\newtheorem{remark}[Def]{Remark}
\newtheorem{lemma}[Def]{Lemma}
\newtheorem{proposition}[Def]{Proposition}
\newtheorem{corollary}[Def]{Corollary}
\newtheorem{theorem}[Def]{Theorem}
\newtheorem{example}[Def]{Example}
\newenvironment{proof}{\noindent{\textit{Proof:}}}{%
\unskip\nobreak\hfil\penalty50\hskip1em\null\nobreak
$\Box$
\parfillskip=\z@\finalhyphendemerits=0\endgraf\bigskip}
\let\oldendexample\endexample
\def\endexample{\unskip\nobreak\hfil\penalty50\hskip1em\null\nobreak\hfil
$\blacksquare$\parfillskip=\z@\finalhyphendemerits=0\endgraf\oldendexample}
\begin{document}

\title{Optimal Dividends Paid in a Foreign Currency for a L\'evy Insurance Risk Model}

\author{Julia Eisenberg \quad\quad \quad\quad   Zbigniew Palmowski\footnote{\Letter\; zbigniew.palmowski@gmail.com} \vspace{0.2cm}\\\hspace{0.5cm}\footnotesize{TU Wien \quad\quad\quad\quad\hspace{1.7cm}  Wroc\l aw University}\vspace{-0.1cm} \\\hspace{4.8cm}\footnotesize{of Science and Technology}}

\date{}

\maketitle

\begin{abstract}
\noindent
This paper considers an optimal dividend distribution problem for an
insurance company where the dividends are paid in a foreign currency.
In the absence of dividend payments, our risk process follows a spectrally negative L\'evy process.
We assume that the exchange rate is described by a an exponentially L\'evy process, possibly containing the same risk sources like the surplus of the insurance company under consideration.
The control mechanism chooses the amount of dividend payments. The objective is to maximise the expected dividend payments received until the
time of ruin and a penalty payment at the time of ruin, which
is an increasing function of the size of the shortfall at ruin.
A complete solution is presented to the corresponding stochastic
control problem.
Via the corresponding Hamilton--Jacobi--Bellman equation we find the necessary and sufficient conditions for optimality of a single dividend barrier strategy.
A number of numerical examples illustrate the theoretical analysis.
\vspace{6pt}
\noindent
\\{\bf Key words:} Optimal control, Dividends, Stochastic discounting, L\'evy processes, Hamilton--Jacobi--Bellman equation.
\settowidth\labelwidth{{\it 2010 Mathematical Subject Classification: }}%
                \par\noindent {\it 2010 Mathematical Subject Classification: }%
                \rlap{Primary}\phantom{Secondary}
                60G51, 93E20\newline\null\hskip\labelwidth
                Secondary 91B30
\end{abstract}

\section{Introduction}
In the public eye, dividend payments are holding the title to be one of the most important signs of financial health and future stability of shares-issuing companies. Thus, a forecast of opulent future dividends, compared to a benchmark such as ten-year Government bonds, will most likely attract new investors, clients and business partners. Therefore, it is natural to consider future dividend payments as a risk measure quantifying company's future profitability and debt sustainability. Since the pathbreaking work of Bruno de Finetti in 1957, \cite{finetti}, substantial research has been carried out on finding the optimal dividend strategy in the framework of the classical risk model or diffusion approximation as a surplus process for an insurance company. The survey \cite{alth} sums up the most important results for these types of surplus. Avram et al.\ \cite{APP} generalised de Finetti's problem to spectrally negative L\'evy processes as surplus. Loeffen extended their results in \cite{Loeffen1} and added transaction costs in \cite{Loeffen2}. Loeffen and Renaud \cite{LR} modified the optimisation problem by adding an affine penalty function at ruin.
\\Despite severe differences in modelling the surplus and additional constraints, the above works have one feature in common: the discounting factor or rather the preference rate. The preference rate is usually assumed to remain constant and positive over time, signalising the setup ``money today is more preferable to money tomorrow''. However, in the times of negative interest rates, like nowadays, a perpetual positive interest rate will lead to deterioration of results. For instance, Akyildirim et al.\ \cite{ak}, Eisenberg \cite{eis}, Jiang and Pistorius \cite{jiang} incorporated stochastic interest rate into the dividend optimisation framework.
\medskip
\\
Another aspect that has not been studied until now in the framework of dividend maximisation are the foreign interest rates. Big insurance companies have clients and shareholders all over the world.
For instance, top global reinsurance companies, including such giants like Munich Re and Swiss Re, have established themselves in the Middle East more than a decade ago and have been recently expanding to Asia and Latin America -- while ``local'' reinsurance companies are still ``in the cradle''.
In the most cases, the dividends are declared in the domestic currency of companies or in US dollars and are paid to the shareholders in the local currency using the actual exchange rate.\medskip
\\
Currency fluctuations are a natural consequence of the floating exchange rate system (i.e.\ a currency's value is allowed to fluctuate in response to foreign-exchange market events), which is used in the most economies. Indeed just a few countries worldwide are currently using the fixed rate approach, where the domestic currency is pegged to a stronger currency or a basket of them. Many factors impact a foreign exchange rate, for instance relative supply and demand of the two currencies, a forecast for inflation etc. Thus, any noticeable changes in the underlying economy affect the exchange rates and the economic activities of almost all domestic market participants. For shareholders of an insurance company such events might become crucial as the affected company can decide to shorten dividend payments due to an unfavourable market situation. Thus, the metamorphoses with the exchange rate and the shortening of dividends can have the same risk component. Note that the impact type described above is rather of a continuous nature, reflecting infinitesimally economic changes on the daily basis.
\\On the other hand, in recent years, a number of incidents known as ``flash crashes'' has been shivering the global financial market. Christensen et al.\ state in \cite{reno} that the number of flashes will be even increasing. The sudden market crashes will again affect both -- the exchange rates and the insurance companies. These changes are of a jump nature, occurring at discrete times but on the regular basis.
\medskip
\\In the present paper we describe the surplus process of an insurance company by a L\'evy process containing a diffusion and a jump part. We assume that the insurance company under consideration targets to maximise the expected discounted amount of dividends paid in a foreign currency. The exchange rate is assumed to follow a L\'evy process featuring a dependence on the surplus process. Since L\'evy processes can be decomposed into a diffusion part and a jump part,   we distinguish two cases for dependencies: dependence of the continuous and jump parts.
The paper is organized as follows.
In Section \ref{sec:model}, we introduce the basic
  notation and describe the model we deal with. Section \ref{sec:pre} is dedicated to the
  related one-sided and two-sided problems.
In Section \ref{sec:main}, we present the
  Verification Theorem, necessary and sufficient conditions for the barrier strategy to be
  optimal.
For the sake of clarity of presentation, we shifted the proofs to Section \ref{sec:proof}. Section \ref{sec:examples} presents two detailed examples.
\section{The Model}\label{sec:model}
Recall the classical Cram\'er-Lundberg model
\begin{equation}\label{classicalrisk} R_t -R_0=
 ct- S_t, \quad S_t=\sum_{k=1}^{N_t} C_k,
\end{equation}
which is used in collective risk theory to describe the surplus $R=\{R_t, t\in\mathbb{R}_+\}$
of an insurance company. Here, $C_k$ are i.i.d. positive random
variables representing the claims and $S_t$ denotes the aggregate claims up to time $t$. The claim number $N=\{N_t, t\in\mathbb{R}_+\}$ is modelled via a homogeneous Poisson process with intensity $\lambda$ and is independent of the claims. Finally, $c$ represents the premium rate fulfilling $c>\lambda m>0$ and $m=\mE[C_1]<\infty$, in order to allow the process to remain non-negative with a positive probability.
\\
In 1973 Gerber \cite{gerber} introduced some uncertainty into the Cram\'er-Lundberg model by adding a Brownian motion. The ``perturbed model" is then $R_t-R_0:= \sigma B_t +c t - S_t $
where $B_t$ denotes a standard Brownian motion, describing small random
fluctuations of the surplus.
\\
A further very important generalization is to replace the aggregate claim amount $S$ by a general subordinator (a non-decreasing L\'{e}vy process, with L\'evy measure $\nu_R(\td
x), x \in \R_+,$ which may have infinite mass).  Under this model, the ``fluctuations'' can arise either continuously, due to the Brownian motion, or due to the infinite jump-activity.
\\
Assuming $S$ to be a pure jump-martingale with i.i.d.\ increments and
negative jumps with L\'{e}vy measure $\nu_R(\td x)$, one arrives
thus to a general integrable spectrally negative L\'{e}vy process
$R=\{R_t,~t\in\mathbb{R}_+\}$ i.e.\ a stochastic process with stationary independent increments, no positive jumps and c\`{a}dl\`{a}g paths with $R_t$ integrable for any $t\geq 0$ and $\mE[R_1]>0$\label{R0} in order for the surplus to be profitable, confer Kyprianou \cite{Kbook} for details. The corresponding L\'evy-Khintchin triple is $(c,\sigma, \nu_R)$ and $R_0=x$, i.e. the generator of $R$ is given by
\[
\mathfrak{A}_1 f(x)=c f'(x)+\frac{\sigma^2}2 f''(x)+\int_{\R}f(x+h)-f(x)-f'(x)h\one_{[|h|\le 1]}\;\nu_R(\td h)\;
\]
\label{generator}
for a suitable function $f$ from the domain of the generator.
\bigskip
\\We further assume that exchange rate process denoted by $Y=\{Y_t,~t\in\mathbb{R}_+\}$
is a L\'evy process with a corresponding triple $(p, \delta, \nu_Y)$ and $Y_0=l$, i.e. the generator of $Y$ has the form
\[
\mathfrak{A}_2 f(l)=p f'(l)+\frac{\delta^2}2 f''(l)+\int_{\R}f(l+h)-f(l)-f'(l)h\one_{[|h|\le 1]}\;\nu_Y(\td h)\;
\]
for a suitable function $f$ from the domain of the generator $\mathfrak A_2$. Note that 
we do not assume that the process $Y$ is a spectrally negative one. Indeed, the discounting factor can evolve into one as well as into the other direction.
\bigskip
\\The both processes are defined on some common probability space $(\Omega,\mathcal{F},\{\mathcal{F}_t\}_{\{t\geq 0\}}, \Prob)$,
where ${\mathcal{F}}=\{\mathcal{F}_t\}_{\{t\geq 0\}}$ is the natural
filtration satisfying the usual conditions of right-continuity and
completeness generated by bivariate L\'evy process
$X=\{X_t:=(R_t,Y_t),~t\in\mathbb{R}_+\}$. {To avoid degeneracies, we exclude the case that $R$ or $Y$ have monotone paths.} We denote by $\nu(\td z,\td y)$ the jump measure of the process $X$. Note that
\[
\nu(\td z, (-\infty,\infty))=\nu_R(\td z)\quad\text{and}\quad \nu([0,\infty),\td y)=\nu_Y(\td y).
\]
We denote by $\{\Prob_{\underline{x}}=\Prob_{(x,l)}, \underline{x}=(x,l)\in\mathbb{R}^2\}$ the family of probability measures that correspond to
the translations of $X$ by a vector, that is, $\Prob[X_0=\underline{x}]=1$.
Later, when it will be clear, we skip underlying of $x$ to note the only dependence on $x$.
In this case by $\E_{\underline{x}}$ and $\E_x$ we denote the corresponding expectations.
Finally, we will use the notation $\Prob_0=\Prob$ and $\E_0=\E$ as well.
\\
To ensure that $R_t$ and $Y_t$ have finite means for fixed $t\geq 0$
the L\'{e}vy measure $\nu$ is assumed to satisfy the integrability condition
\[
\int_{[\mathbb{R}\backslash (-1,1)]^2} ||\underline x|| \;\nu(\td \underline x) < \infty.
\]
As stated in the introduction, the processes $R$ and $Y$ are assumed to be dependent.
Since the continuous part and the jump part of a L\'evy process are independent it is enough to consider the dependence structure of the continuous and the discontinuous part separately.
The generator of the process $X$ in case of both types of dependency is given by
\begin{align*}
&\mathfrak{A} f(x,l)=c f_x(x,l)+\frac{\sigma^2}2f_{xx}(x,l)+p f_l(x,l)+\frac{\delta^2}2f_{ll}(x,l)+\rho\sigma\delta f_{xl}(x,l)
\\&{}+\int_{\R^2}f(x+h_2,l+h_1)-f(x,l)-f_x(x,l)h_1\one_{[|h_1|\le 1]}-f_l(x,l)h_2\one_{[|h_2|\le 1]}\;\nu(\td h_1,\td h_2)\;.\nonumber
\end{align*}
If $R$ and $Y$ depend just over the jump part, $\rho\sigma\delta f_{xl}(l,x)$ disappears. By dependency just over the continuous part, the integral above transforms to
\begin{align*}
\int_{\R}f(x,l+h)-f(x,l)-f_l(x,l)h&\one_{[|h|\le 1]}\;\nu_Y(\td h)
\\&{}+\int_{\R}f(x+h,l)-f(x,l)-f_x(x,l)h\one_{[|h|\le 1]}\;\nu_R(\td h)\;.
\end{align*}
For more details confer for instance \cite{muller}.
\bigskip
\\We assume that the considered insurance company pays dividends and the ex-dividend process is given by
\begin{equation*}
R^\pi_t=R_t-L^\pi_t,
\end{equation*}
where $\pi$ denotes a strategy chosen from the set $\Pi$ of all admissible dividend  controls, resulting in dividend process $L^{\pi}_t$ - denoting the accumulated dividends under $\pi$ paid up to time $t$. An admissible dividend strategy $\pi$ generates the dividend process $L^\pi=\{L^\pi_t, t\in\R_+\}$, which is cadlag, adapted  to the filtration $ \mathcal F=\{\mathcal{F}_t\}_{t\ge0}$, and at any time preceding the ruin, the dividend payment is smaller than the size of the available reserves ($L^\pi_t-L^\pi_{t-}<R_{t-}^\pi$), i.e. the ruin cannot be caused by a dividend payment. \label{admissible}
\medskip
\\
The object of interest is the expected discounted amount of dividends paid in a domestic and declared in a foreign currency
\[
\mathcal{D}(\pi):=\int_0^{T^\pi} e^{- Y_t}\md L^{\pi}_t
\]
and the expected discounted penalty payment (so-called Gerber-Shiu function)
\[
\mathcal{W}(\pi):=e^{- Y_{T^\pi}}w(R_{T^\pi}^\pi)\;.
\]
Here, $T^\pi:=\inf\{t \geq 0: R^{\pi}_t<0\}$ is the ruin time and $w$ is a penalty function acting on the negative half-line.
Later, unless it is necessary we will write $T$ instead of $T^\pi$ to simplify the notation.
\\The process $Y_t$ apart from the interpretation as an exchange rate also describes discounting. In particular if $Y_t=qt$ the $q$ could be interpreted as a given discount or rather a preference rate, describing the monetary preferences of the considered insurance company. Our objective is to maximise
\[
V_\pi(x,l):=\mE_{(x,l)}\big[\mathcal{D}(\pi)\big]+\mE_{(x,l)}\big[\mathcal{W}(\pi)\big]
\]
over all admissible strategies, that is to find the so-called value function
\begin{align}
\label{cost2}
\begin{split}
&V(x,l):=\sup_{\pi\in \Pi}V_\pi(l,x),
\end{split}
\end{align}
and the optimal strategy $\pi_*\in\Pi$, if it exists, such that
\begin{equation*}
V(x,l)=V_{\pi_*}(x,l)\qquad\text{for all }x\geq 0,\, l\in\R.
\end{equation*}
\subsection{Preliminaries}\label{sec:pre}
In this section, we summarise the basic definitions and properties of L\'evy processes and some other concepts we will use in our modelling.
\\We conjecture that the optimal dividend payment strategy will be of a barrier type. It means that the dividends are paid as the excess of the surplus above a certain constant level, say $a>0$. If the surplus is above the level $a$, the excess will be immediately distributed as a lump sum dividend payment and the surplus amounts to $a$. By starting below $a$, the insurance company will not pay any dividends until the surplus attains $a$, the considerations stop if the surplus attains $0$ before attaining $a$. Therefore, we will need the following first passage times
\[
\tau_a^+:=\inf\{t\geq 0: R_t\geq a\}\quad\text{and}\quad \tau_0^-:=\inf\{t\geq 0: R_t<0\}\;.
\]
We will define now formally auxiliary functions $\Delta$ for which the following exit identity holds true
\begin{gather}
\label{exit1}
\E_{(x,l)}\left[e^{ -Y_{\tau_a^+}}\one_{[ \tau_a^+<\tau^-_0]}\right]= \frac{\Delta(x)}{\Delta(a)}e^{-l},
\end{gather}
where $x\in(0,a)$.
\\
In the following, we recall some results from the fluctuation theory for spectrally negative L\'{e}vy processes. For more details, confer \cite{Kbook,Sato} and references therein.
\bigskip
\\
Let
\begin{equation*}
\E_{(x,l)}[e^{\langle \theta,X_t\rangle}] = e^{t\psi(\theta)+\theta_1 x+\theta_2 l}
\end{equation*}
for $\theta=(\theta_1, \theta_2)\in D \subseteq  \R_+ \times \R$ and some set $D$ for which above expectation is well-defined and $\langle .,.\rangle$ is a scalar product.
For any $\theta\in D$ we denote by $\Prob^\theta$ an exponential tilting of measure $\Prob$ with Radon-Nikodym derivative
$\Prob$ given by
\begin{equation*}
\left.\frac{\td \Prob^\theta}{\td\Prob}\right|_{\mathcal{F}_t} = \exp\Big\{\langle \theta,X_t\rangle - \psi(\theta)t\Big\}.
\end{equation*}
Under the measure $\Prob^\theta$ the process $X$ is still a bivariate L\'{e}vy process
with the Laplace exponent
$\phi_\theta(s)$ with $s\in\R^2$ given by:
\begin{equation}
\label{tildepsi}
\phi_\theta(s)=\log\Big(\mE^{\theta}\Big[e^{\langle s, X_1\rangle}\Big]\Big)=\psi(s+\theta)-\psi(\theta)\;,
\end{equation}
where $\E^\theta$ denotes the expectation with respect to $\Prob^\theta$. From now on we assume that there exists an $\alpha \geq 0$ such that
\begin{equation}
\label{alphadef}
\psi(\alpha, -1)=0\;.
\end{equation}
We denote by
$$\psi_R(\beta):=\phi_{(\alpha, -1)}(\beta, 0)=\log\Big(\E^{(\alpha, -1)} [e^{\beta R_1}]\Big)$$
the Laplace exponent of $R$ under $\Prob^{(\alpha, -1)}$.
Note that under $\Prob^{(\alpha, -1)}$ the process $R$ has the following L\'evy-Khintchin triple:
\begin{align}
\label{newtriple}
(\tilde c, \sigma, \mu_R)\;,
\end{align}
where
\begin{align*}
&\tilde c:=c+\alpha \sigma^2-\rho\sigma\delta+\int_{\R^2} \Big\{e^{\alpha h_1-h_2}-1\Big\}h_1\one_{[|h_1|\le 1]}\;\nu(\td h_1,\td h_2)\;,
\\&\mu_R(A): =\int_{A\times\R} e^{\alpha h_1-h_2}\;\nu(\td h_1,\td h_2)\;\mbox{for all Borel sets $A$}\;.
\end{align*}
Further, there exists a function $W^{\alpha}: [0,\infty) \to [0,\infty)$, called the scale function, confer for instance \cite{bert}, continuous and increasing with Laplace transform
\begin{equation}
\label{eq:defW}
\int_0^\infty
e^{-\beta y} W^{\alpha}(y)  \md y = \psi_R(\beta)^{-1}\;.
\end{equation}
The domain of $W^{\alpha}$ is extended to the entire real axis by setting $W^{\alpha}(z)=0$ for $z<0$.
\subsubsection{Assumption 1:} Throughout the paper we assume that
the following (regularity) condition is satisfied:
\begin{equation}\label{condW1}
W^\alpha\in \mathcal C^2(0,\infty).
\end{equation}
To get it we can assume that either
\begin{equation}\label{eq:condW}
\mu_R(-\infty,-x),\;\; x\ge0\;\; \mbox{has a completely monotone density;
}
\end{equation}
see \cite[p. 695]{Chan} and \cite{Loeffen1}\footnote{A function $f$ with the domain $(0,\infty)$ is said to be completely monotone, if the derivatives $f^{(n)}(x)$ exist for all $n= 0,1,2,3,...$ and $(-1)^nf^{(n)}(x)\ge 0$ for all $x>0$.\label{foot}}, or
\[\sigma^2>0;\]
see \cite[Thm. 1]{Chan}, or that $R_t$ is given in \eqref{classicalrisk} with
\begin{align*}
\mu_R(-\infty,-x)\in \mathcal C^1(0,\infty);
\end{align*}
see \cite[Problem 8.4 (ii)]{Kbook}.
\bigskip
\\The function $W^\alpha$ plays a key role in the solution of the two-sided exit problem as shown by the following classical identity:
\begin{equation}
\label{twoex}
\Prob^{(\alpha, -1)}_{x}\left[\tau^-_0 > \tau^+_a\right] = \frac{W^\alpha(x)}{W^\alpha(a)} \end{equation}
that holds for $x\in[0,a]$, see \cite{Kbook}. The function $\Delta$ defined in \eqref{exit1} is related to the above scale function $W^{\alpha}$ in following way.
\begin{lemma}\label{Wrepr}
It holds
\begin{equation*}\label{rel1}
\Delta(z)= e^{\alpha z}W^\alpha(z).
\end{equation*}
\end{lemma}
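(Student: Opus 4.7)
The plan is to exploit the exponential martingale $M_t = \exp(\alpha R_t - Y_t)$, which is a genuine $\Prob$-martingale precisely because $\psi(\alpha,-1)=0$ by \eqref{alphadef}. This martingale is exactly the Radon--Nikodym density that drives the tilt $\Prob^{(\alpha,-1)}$, so it provides the bridge between the exchange rate discount appearing in the definition \eqref{exit1} of $\Delta$ and the spectrally negative scale function $W^\alpha$ appearing in the classical two-sided exit identity \eqref{twoex}.

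First I would fix $x\in(0,a)$ and $l\in\R$ and rewrite the density, starting from $(x,l)$, as
\[
\left.\frac{\td\Prob^{(\alpha,-1)}_{(x,l)}}{\td\Prob_{(x,l)}}\right|_{\mathcal F_t}=\exp\bigl(\alpha R_t-Y_t-(\alpha x-l)\bigr).
\]
Since $\tau_a^+\wedge\tau_0^-$ is a stopping time and, on $\{\tau_a^+<\tau_0^-\}$, the process remains in the bounded strip $R_t\in[0,a]$ up to $\tau_a^+$ (so $e^{\alpha R_t}$ is bounded), optional stopping can be applied after an obvious truncation $\tau_a^+\wedge\tau_0^-\wedge t$ and passage to the limit by dominated convergence. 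This yields
\[
\E_{(x,l)}\bigl[e^{-Y_{\tau_a^+}}\one_{\{\tau_a^+<\tau_0^-\}}\bigr]
=e^{\alpha x-l}\,\E^{(\alpha,-1)}_{(x,l)}\bigl[e^{-\alpha R_{\tau_a^+}}\one_{\{\tau_a^+<\tau_0^-\}}\bigr].
\]

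Next I would use the assumption that $R$ is spectrally negative: having no positive jumps, it creeps upward continuously to the level $a$, so $R_{\tau_a^+}=a$ on $\{\tau_a^+<\infty\}$. Thus
\[
\E_{(x,l)}\bigl[e^{-Y_{\tau_a^+}}\one_{\{\tau_a^+<\tau_0^-\}}\bigr]
=e^{\alpha x-l}\,e^{-\alpha a}\,\Prob^{(\alpha,-1)}_{x}\bigl[\tau_a^+<\tau_0^-\bigr]
=\frac{e^{\alpha x}W^\alpha(x)}{e^{\alpha a}W^\alpha(a)}\,e^{-l},
\]
where the last equality is precisely the classical two-sided exit identity \eqref{twoex}. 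Comparing with \eqref{exit1} we can read off $\Delta(z)=e^{\alpha z}W^\alpha(z)$, normalised so that the ratio $\Delta(x)/\Delta(a)$ matches.

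The only delicate point I would anticipate is the justification of the change of measure at the stopping time $\tau_a^+$ on $\{\tau_a^+<\tau_0^-\}$ when $\tau_a^+$ may not be $\Prob$-a.s. finite (for instance when $\E R_1$ is not strictly positive under $\Prob^{(\alpha,-1)}$). This is handled by the standard truncation argument sketched above: apply optional stopping at $\tau_a^+\wedge\tau_0^-\wedge n$, observe that on $\{\tau_a^+<\tau_0^-\}$ the density $e^{\alpha R_{\cdot}-Y_{\cdot}}$ stays uniformly bounded because $R_t\le a$, and let $n\to\infty$. Everything else is an algebraic book-keeping step that combines the tilted exit probability \eqref{twoex} with the creeping property of spectrally negative L\'evy processes.
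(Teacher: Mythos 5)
Your proposal is correct and follows essentially the same route as the paper: perform the exponential change of measure $\Prob\to\Prob^{(\alpha,-1)}$ (justified by $\psi(\alpha,-1)=0$), use the fact that the spectrally negative process creeps upward so $R_{\tau_a^+}=a$ on $\{\tau_a^+<\tau_0^-\}$, and then invoke the classical two-sided exit identity \eqref{twoex}. The paper compresses these steps into a single displayed equation, while you spell out the martingale, the optional stopping with truncation, and the dominated-convergence limit; the substance is identical.
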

\begin{proof}
Note that by \eqref{twoex} we have
\begin{equation*}
\E_{(x,l)}\left[e^{-Y_{\tau_a^+}}\one_{[ \tau_a^+<\tau^-_0]}\right]=
e^{\alpha(x-a)}e^{-l}\cdot \Prob^{(\alpha, -1)}\left[\tau_a^+<\tau^-_0\right]
=e^{-l}\cdot \frac{e^{\alpha x}W^\alpha(x)}{e^{\alpha a}W^\alpha(a)}
\end{equation*}
which completes the proof.
\end{proof}
In order for the optimisation problem to be well-defined, we require the following condition.
\subsubsection{Assumption 2:}
\begin{equation}
\psi(0,-1)<0\;.\label{assump:2}
\end{equation}
We first show that without this assumption the value function could be infinite. Indeed, let $\psi(0,-1)>0$ and we assume w.o.l.g. $w=0$ and $\mE[R_1]<\infty$. Letting $b:=\mE[R_1]/2$ and defining $\pi^b$ to be the strategy with the dividend payout $L_t^{\pi^b}=bt$ yields using Tonelli's theorem
\begin{align*}
V(x,l)&\ge V_{\pi^b}(x,l)=b \E_{(x,l)} \bigg[\int_0^{T^{\pi^b}}e^{-Y_t}\md t\bigg]=b  \int_0^\infty \E_{(x,l)}\Big[e^{-Y_t}\one_{[T^{\pi^b}>t]}\Big]\md t
\\&=b  e^{-l}\int_0^\infty e^{\psi(0,-1)t}\cdot \mP_x^{(0,-1)}\Big[T^{\pi^b}>t\Big]\md t\;.
\end{align*}
Because the ex-dividend process fulfils $\mE[R_1^{\pi^b}]>0$, confer the definition of $R$ on p.\ \pageref{R0} it holds $\mP[T^{\pi^b}=\infty]>0$ and accordingly $\mP_x^{(0,-1)}\big[T^{\pi^b}=\infty\big]>0$.
Thus we immediately get $V(x,l)=\infty$.
\\
On the other hand, under Assumption \eqref{assump:2} our value function is well-defined. Indeed,
this assumption yields that $\alpha>0$ for $\alpha$ solving \eqref{alphadef}.
Therefore, due to the continuity of $\psi$ there is an $a\in(0,\alpha)$ such that $\psi(a,-1)<0$. Then,
\begin{align*}
V(x,l)&\le \sup\limits_{\pi\in\Pi}\E_{(x,l)} \bigg[\int_0^{T^{\pi}}e^{-Y_t}R_t\md t\bigg]\le \sup\limits_{\pi\in\Pi}\int_0^\infty \E_{(x,l)} \big[e^{-Y_t}R_t\one_{[T^{\pi}>t]}\big]\md t
\\&\le \frac 1a\sup\limits_{\pi\in\Pi}\int_0^\infty \E_{(x,l)} \big[e^{a R_t-Y_t}\one_{[T^{\pi}>t]}\big]\md t
=\frac{e^{-l}}a \sup\limits_{\pi\in\Pi}\int_0^\infty e^{\psi(a,-1)t}\E_{x}^{(a,-1)} \big[\one_{[T^{\pi}>t]}\big]\md t
\\&\le \frac{e^{-l}}a \sup\limits_{\pi\in\Pi}\int_0^\infty e^{\psi(a,-1)t}\md t<\infty\;.
\end{align*}
\subsubsection{Penalty functions}
Throughout the paper we will also consider the penalty functions belonging to the family of functions
$\mathcal{R}$ which is defined in the following way.
$\mathcal{R}$ is the set of c\`adl\`ag
functions
$w:(-\infty,0]\to\mathbb{R}$ that are left-continuous at~$0$,
admit a finite first
left-derivative $w_-'(0)$ at $0$, and satisfy the
integrability condition
\begin{eqnarray*}
&& \sup_{y>1}  \int_{[y,\infty)}\sup_{u\in[y-1,y]}|w(u-z)|e^{\alpha z}\;\nu_R(\td z) < \infty.
\end{eqnarray*}
Let
\begin{equation}
\label{Gqw}
\gqw (x):=e^l\cdot \E_{(x,l)}\left[e^{-Y_{\tau_0^-}}w(R_{\tau_0^-})\right]= \E_x^{(\alpha, -1)}\left[w(R_{\tau_0^-})\right].
\end{equation}
For $w\in\mathcal{R}$, from Proposition 4.9 in Avram et al. \cite{APP2}, we have
the following lemma.
\begin{lemma}\label{Grepr}
Let $w\in\mathcal{R}$. For any $x\in\mathbb{R}$ it holds
\begin{eqnarray*}
&& \gqw (x) = F_w(x) - W^{\alpha}(x)\kappa_w,\quad\text{with}\\
&& \kappa_w := \left[\frac{\sigma^2}{2}w'(0-) + \frac{1}{\E^{(\alpha, -1)}[ R_1]}w(0) -
\mathcal{L}w_\nu\right],\nonumber
\end{eqnarray*}
where $\mathcal{L}w_\nu=\int_0^\infty \int_x^\infty[w(x-z)-w(0)]e^{\alpha z}\;\nu_R(\md z) \md x$
and the function $F_{w}:\mathbb{R}\to\mathbb{R}$ is given by $F_w(x)=w(x)$ for
$x< 0$, and by
\begin{align*}
& F_w(x) = w(0) + w_-'(0) x - \int_0^x W^{\alpha}(x-y) J_w(y) \md y, \; x\in\mathbb{R}_+,\ \text{with}\\
& J_w(x) = w_-'(0) c+ \int_{x}^\infty\{w(x-z) - w(0) +
w_-'(0)(z-x)\}e^{\alpha z}\;\nu_R(\td z)\;.
\end{align*}
\end{lemma}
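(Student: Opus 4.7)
The plan is to reduce the statement to Proposition 4.9 of Avram et al.\ \cite{APP2} via the exponential change of measure $\Prob \to \Prob^{(\alpha,-1)}$. The choice $\theta=(\alpha,-1)$ is tailored so that $\psi(\alpha,-1)=0$, which kills the exponential drift in the Radon--Nikodym density and converts the discounted-by-$e^{-Y_t}$ Gerber--Shiu expectation into an \emph{un}discounted penalty function for a spectrally negative L\'evy process, to which \cite{APP2} applies verbatim.

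The first step is to establish the identity $\gqw(x)=\E^{(\alpha,-1)}_x[w(R_{\tau_0^-})]$ that is already recorded in \eqref{Gqw}. For this I would invoke the exponential martingale $M_t := \exp\bigl(\alpha(R_t-x)-(Y_t-l)-\psi(\alpha,-1)t\bigr)=\exp\bigl(\alpha(R_t-x)-(Y_t-l)\bigr)$ and apply optional stopping at the bounded stopping times $\tau_0^-\wedge t_n$, then pass to the limit $t_n\to\infty$ by dominated convergence; the integrability hypothesis built into the class $\mathcal{R}$ (together with the spectral-negativity-induced one-sided control on the overshoot $R_{\tau_0^-}$) justifies the interchange.

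The second step is to identify the law of $R$ under $\Prob^{(\alpha,-1)}$. From \eqref{tildepsi} its Laplace exponent equals $\psi_R(\beta)=\phi_{(\alpha,-1)}(\beta,0)=\psi(\beta+\alpha,-1)$, and the associated L\'evy--Khintchin triple is exactly $(\tilde c,\sigma,\mu_R)$ recorded in \eqref{newtriple}; by \eqref{eq:defW}, the scale function of this tilted spectrally negative L\'evy process is precisely $W^\alpha$. So $\gqw(x)$ is literally the classical expected penalty at ruin for a spectrally negative L\'evy process with scale function $W^\alpha$, and Proposition 4.9 of \cite{APP2} delivers the decomposition $\gqw(x)=F_w(x)-W^\alpha(x)\kappa_w$ together with the stated expressions for $J_w$, $\kappa_w$, and $\mathcal{L}w_\nu$.

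The main obstacle, besides the change-of-measure bookkeeping, is the translation of the integrability hypothesis of $\mathcal{R}$ (stated against $e^{\alpha z}\nu_R(\td z)$) into the integrability required by \cite{APP2} (which is against the tilted L\'evy measure $\mu_R$), and the verification that the drift and integrand constants appearing in $J_w$ are the correct ones under $\Prob^{(\alpha,-1)}$: this is the only step where careful matching of the two-dimensional original data $(c,\sigma,\nu)$ with the one-dimensional tilted data $(\tilde c,\sigma,\mu_R)$ is required. Once this identification is in place, the conclusion is an immediate citation.
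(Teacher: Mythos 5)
Your approach is exactly the paper's: the paper does not give a proof of this lemma at all, but presents it as an immediate citation of Proposition 4.9 of Avram et al.\ \cite{APP2}, once the change of measure to $\Prob^{(\alpha,-1)}$ has been recorded in \eqref{Gqw} and the scale function of the tilted process identified as $W^\alpha$ via \eqref{newtriple} and \eqref{eq:defW}. Your additional discussion of the exponential martingale, the dominated-convergence passage to the limit, and the translation of the integrability hypothesis from $e^{\alpha z}\nu_R(\td z)$ to the tilted L\'evy measure $\mu_R$ fleshes out bookkeeping that the paper leaves implicit, but it is the same route, not a new one.
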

\section{Main Results}\label{sec:main}
In this section we will present the main result of the paper, namely show that the optimal strategy among all admissible strategies $\Pi$, defined on p.\ \ref{admissible}, is of a constant barrier type. For that purpose, we consider the corresponding Hamilton--Jacobi--Bellman (HJB) equation, which has been derived using heuristic arguments, see for instance \cite{Schmbook}.
\begin{align}\label{HJB}
&\max\bigg\{c V_x(x,l)+\frac{\sigma^2}2 V_{xx}(x,l)+p V_l(x,l)+\frac{\delta^2}2 V_{ll}(x,l)+\rho\delta\sigma V_{xl}(x,l)\nonumber
\\&{}+\int_{\R^2}V(x+h_2,l+h_1)-V(x,l)-V_x(x,l)h_2\one_{[|h_2|\le 1]}-V_l(x,l)h_1\one_{[|h_1|\le 1]}\;\nu(\td h_1,\td h_2),\nonumber
\\&\quad{} e^{-l}-V_x(x,l)\bigg\}=0,
\end{align}
subject to the boundary condition
\begin{equation}
\label{eq:bc}
\begin{cases}
V(x,l) =e^{-l} w(x), &  \text{for all $x<0$},\\
\vspace{-0.6cm}\\
V(0,l) = e^{-l}w(0), &  \text{in the case $\sigma^2>0$ or $\int_{-1}^0 y e^{\alpha y}\;\nu_R(\td y) = \infty$}.
\end{cases}
\end{equation}
The second part of the HJB equation \eqref{HJB}, multiplied by $e^{l}$ yields $1-e^lV_x(x,l)$, which can result in a constant barrier strategy for the surplus if $e^lV_x(x,l)$ does not depend on $l$.
Since we conjecture that the optimal strategy is of a constant barrier type, the value function should have the form $e^{-l}F(x)$ and the HJB equation becomes
\begin{align}\label{HJB2}
\max\bigg\{&c F'(x)+\frac{\sigma^2}2 F''(x)-p F(x)+\frac{\delta^2}2 F(x)-\rho\delta\sigma F'(x)
\\&{}+ \int_{\R^2}e^{-h_1}F(x+h_2)-F(x)-F'(x)h_2\one_{[|h_2|\le 1]}+F(x)h_1\one_{[|h_1|\le 1]}\;\nu(\td h_1,\td h_2),\nonumber
\\&{} 1-F'(x)\bigg\}=0\;.\nonumber
\end{align}
subject to the boundary condition
\begin{equation}\label{eq:bc2}
\begin{cases}
F(x) = w(x), &  \text{for all $x<0$},\\
\vspace{-0.6cm}\\
F(0) = w(0), &  \text{in the case $\sigma^2>0$ or$\int_{-1}^0 y e^{\alpha y}\;\nu_R(\td y) = \infty$}.
\end{cases}
\end{equation}
If $w(x)=0$ for all $x< 0$, which corresponds to Gerber-Shiu function $\mathcal{W}(\pi)=0$, then the boundary condition \eqref{eq:bc2} is equivalent to the requirement that $F$ equals zero on the negative half-line.
\bigskip
\\In order to prove the optimality of a barrier strategy we consider the HJB equation \eqref{HJB2} with boundary conditions \eqref{eq:bc2} first.
\begin{theorem}[Verification Theorem]\label{verthm}
Let $\pi$ be an admissible dividend strategy such that $V_\pi$ is
twice continuously differentiable
and ultimately dominated by some affine function.
If \eqref{HJB} -- \eqref{eq:bc} hold true for $V_\pi$ then $V_\pi(x,l)=V(x,l)$ for all $x\geq 0$, $l\in\R$.
\end{theorem}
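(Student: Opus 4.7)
The plan is the standard verification argument: to conclude $V_\pi = V$ it suffices to prove $V_\pi(x,l)\ge V_{\tilde\pi}(x,l)$ for every admissible $\tilde\pi\in\Pi$, since $V_\pi\le V$ holds by the very definition \eqref{cost2}. Fix such a $\tilde\pi$, write $\tilde R:=R^{\tilde\pi}$, $\tilde T:=T^{\tilde\pi}$, and decompose $L^{\tilde\pi}=L^{\tilde\pi,c}+L^{\tilde\pi,d}$ into its continuous and purely discontinuous parts.

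The heart of the proof is an It\^o--Meyer change-of-variables applied to the $\mathcal C^2$ function $V_\pi$ along the controlled bivariate semimartingale $(\tilde R_t,Y_t)$ on $[0,\tilde T\wedge t]$. The resulting decomposition splits into four pieces: (i) the integrated uncontrolled generator $\int_0^{\cdot}\mathfrak{A} V_\pi(\tilde R_{s-},Y_{s-})\,\md s$; (ii) a Stieltjes term $-\int_0^{\cdot}V_{\pi,x}(\tilde R_{s-},Y_{s-})\,\md L^{\tilde\pi,c}_s$ from the continuous control; (iii) the sum $\sum_{s\le\cdot}\bigl[V_\pi(\tilde R_{s-}-\Delta L^{\tilde\pi}_s,Y_s)-V_\pi(\tilde R_{s-},Y_s)\bigr]$ over dividend lump jumps; and (iv) a local martingale $M$ from the Brownian and compensated-Poisson integrals. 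The HJB equation \eqref{HJB} provides exactly the two bounds needed to control the non-martingale pieces: $\mathfrak{A} V_\pi\le 0$ on the interior, and $V_{\pi,x}(x,l)\ge e^{-l}$ pointwise. The first bound makes (i) non-positive; the second, combined with the fundamental theorem of calculus and the admissibility constraint $\Delta L^{\tilde\pi}_s<\tilde R_{s-}$ (which keeps the integration inside the domain where the gradient bound holds), shows that (ii) is bounded above by $-\int e^{-Y_{s-}}\,\md L^{\tilde\pi,c}_s$ and each summand in (iii) by $-e^{-Y_s}\Delta L^{\tilde\pi}_s$. Rearranging yields the key supermartingale-type inequality
\[
V_\pi(\tilde R_{\tilde T\wedge t},Y_{\tilde T\wedge t}) + \int_0^{\tilde T\wedge t}e^{-Y_{s-}}\,\md L^{\tilde\pi}_s \;\le\; V_\pi(x,l)+M_{\tilde T\wedge t}.
\]

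To finish, I would localize $M$ along a sequence $\tau_n\uparrow\infty$ so that $M^{\tau_n}$ is a true martingale, take $\E_{(x,l)}$-expectations at $\tilde T\wedge t\wedge\tau_n$, and pass $n\to\infty$ followed by $t\to\infty$. On $\{\tilde T<\infty\}$ the boundary condition \eqref{eq:bc} identifies $V_\pi(\tilde R_{\tilde T},Y_{\tilde T})$ with $e^{-Y_{\tilde T}}w(\tilde R_{\tilde T})$, so the limit reconstructs $V_{\tilde\pi}(x,l)$ exactly. The main obstacle is justifying this passage to the limit: one must control $\E_{(x,l)}[V_\pi(\tilde R_t,Y_t)\one_{[\tilde T>t]}]$ as $t\to\infty$ even though $\tilde R_t$ is unbounded under $\tilde\pi$. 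This is precisely where the hypothesis that $V_\pi$ is ultimately dominated by an affine function enters: combined with the exponential-tilting estimate underlying Assumption 2 \eqref{assump:2} --- namely that $e^{\psi(a,-1)t}\E_x^{(a,-1)}[\one_{[\tilde T>t]}]$ decays to $0$ for some $a\in(0,\alpha)$ --- it delivers the uniform integrability required to interchange limit and expectation, and a parallel growth bound disposes of the localized martingale contribution after the last fragment of a delicate handling of the boundary term at $\tilde R_{\tilde T}=0$ in the creeping case covered by the second branch of \eqref{eq:bc}.
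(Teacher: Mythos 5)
Your argument is sound in outline, but it takes a genuinely different route from the paper. You run the classical direct verification: apply It\^o--Meyer to $V_\pi$ along the controlled pair $(\tilde R,Y)$ for an arbitrary admissible $\tilde\pi$, use the two inequalities encoded in the HJB ($\mathfrak{A}V_\pi\le 0$ and $V_{\pi,x}\ge e^{-l}$) to drop the generator term and dominate the continuous- and lump-dividend terms by $-\int_0^{\cdot} e^{-Y_{s-}}\,\td L^{\tilde\pi}_s$, then localize, take expectations, and pass to the limit with the help of the affine-growth hypothesis and the tilting estimate of Assumption 2, which together give $\E_{(x,l)}[V_\pi(\tilde R_t,Y_t)\one_{[\tilde T>t]}]\to 0$. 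The paper instead proves a dual representation $V(x,l)=\min_{g\in\mathcal{G}}g(x,l)$, where $\mathcal{G}$ is a class of supersolutions characterized by supermartingality of $g(R_{\cdot\wedge T_I},Y_{\cdot\wedge T_I})$ along the \emph{uncontrolled} process and the pointwise gradient bound \eqref{diffineq}; the inclusion $\mathcal{G}\subset\mathcal{H}_{[0,\infty)}$ is obtained not by It\^o along the controlled semimartingale but by approximating the general cadlag control $L^{\tilde\pi}$ with controls paying out only at dyadic times (Proposition \ref{repr2}), for which the supermartingale inequality telescopes into an uncontrolled-Dynkin step between grid points plus the gradient bound at each payout. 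Once $V=\min_{g\in\mathcal{G}}g$ is in place, the verification theorem reduces to checking $V_\pi\in\mathcal{G}$, a one-line Dynkin-formula observation. The trade-off: your route is closer to the familiar PDE-verification idiom and requires no dual characterization, but it leans on a careful semimartingale decomposition of $V_\pi(\tilde R_t,Y_t)$ for a merely cadlag finite-variation control --- in particular separating the control's jumps from the jumps of $X$ at common times and verifying that $\Delta L^{\tilde\pi}_s<\tilde R^{\tilde\pi}_{s-}$ keeps the mean-value step inside $[0,\infty)$ --- plus the localization/UI bookkeeping you correctly flag; the paper instead pays for its simplicity at the final step with the dyadic-approximation machinery (Lemmas \ref{dynamic}--\ref{vHJB} and Proposition \ref{repr2}). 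Both close the gap; neither is obviously shorter, they just divide the technical labor differently.
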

\begin{proof}
For the sake of clarity of presentation, the proof is postponed to p.\ \pageref{proof:1} in Section \ref{sec:proof}.
\end{proof}
Now, we will focus on the set of barrier strategies paying out any excess above a given level as dividends. Let $a>0$ and $\pi_a$ denote a barrier and the corresponding barrier strategy. In the following, we will investigate the properties of the return functions corresponding to barrier strategies in order to apply Theorem \ref{verthm}.
For simplicity, we will denote the return function corresponding to the strategy $\pi_a$ by $V_a(x,l)=e^{-l}F_a(x)$, i.e.
\[
F_a(x):=e^l\mE_{(x,l)}\bigg[\int_0^{T^{\pi_a}} e^{-Y_t}\md L^{\pi_a}_t+\mathcal{W}(\pi_a)\bigg]\;.
\]
The above representation is possible because the underlying barrier does not depend on $Y$.
\begin{theorem}\label{barthm}
It holds
\begin{equation}\label{va}
F_{a}(x) =
\begin{cases} \frac{\Delta (x)}{\Delta'(a)}\big(1-\gqw^\prime(a)\big)+ \gqw(x),
& x \leq a,\\
x - a + F_a(a), & x >  a
\end{cases}
\end{equation}
with $\gqw$ defined in \eqref{Gqw}. The function $F_a$ is continuously differentiable with respect to $x$ on $[0,\infty)$.
\end{theorem}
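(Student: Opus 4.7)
The analysis splits by the position of the starting capital $x$ relative to the barrier. For $x>a$ the strategy $\pi_a$ begins with an instantaneous lump-sum dividend payment of $x-a$, after which the controlled reserve sits at $a$ while the exchange rate is still $Y_0=l$; applying the strong Markov property at time $0^+$ then gives $V_{\pi_a}(x,l)=e^{-l}(x-a)+V_{\pi_a}(a,l)$, which is the second branch of \eqref{va}. For $x\in[0,a]$ no dividend is paid before $\sigma:=\tau_a^+\wedge\tau_0^-$, so $R^{\pi_a}=R$ on $[0,\sigma]$, and spectral negativity of $R$ forces $R_{\tau_a^+}=a$ on $\{\tau_a^+<\tau_0^-\}$.

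Conditioning on $\mathcal{F}_\sigma$ and exploiting the factorisation $e^{-Y_{\sigma+t}}=e^{-Y_\sigma}e^{-(Y_{\sigma+t}-Y_\sigma)}$ together with the stationary independent increments of $X$, the strong Markov property delivers
\begin{equation*}
F_a(x)=e^l\mE_{(x,l)}\bigl[e^{-Y_{\tau_a^+}}\one_{[\tau_a^+<\tau_0^-]}\bigr]F_a(a)+e^l\mE_{(x,l)}\bigl[e^{-Y_{\tau_0^-}}w(R_{\tau_0^-})\one_{[\tau_0^-<\tau_a^+]}\bigr].
\end{equation*}
By \eqref{exit1} and Lemma \ref{Wrepr} the first prefactor equals $\Delta(x)/\Delta(a)$. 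The residual penalty term I would compute by applying the very same decomposition to $G_w(x)$ itself: splitting on $\{\tau_0^-<\tau_a^+\}$ versus $\{\tau_a^+<\tau_0^-\}$ and using the strong Markov property at $\tau_a^+$ on the latter event turns the post-$\tau_a^+$ contribution into $\Delta(x)G_w(a)/\Delta(a)$, whence
\begin{equation*}
e^l\mE_{(x,l)}\bigl[e^{-Y_{\tau_0^-}}w(R_{\tau_0^-})\one_{[\tau_0^-<\tau_a^+]}\bigr]=G_w(x)-\frac{\Delta(x)}{\Delta(a)}G_w(a).
\end{equation*}
Substituting produces the intermediate identity
\begin{equation*}
F_a(x)=\frac{\Delta(x)}{\Delta(a)}\bigl(F_a(a)-G_w(a)\bigr)+G_w(x),\qquad 0\le x\le a,
\end{equation*}
which still carries the unknown $F_a(a)$.

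To close the system I would invoke the smooth-fit principle at $x=a$. The affine branch on $(a,\infty)$ yields $F_a'(a^+)=1$; the matching condition $F_a'(a^-)=1$ is the standard signature of a barrier strategy in the spectrally negative setting, since the controlled pair $(R^{\pi_a},L^{\pi_a})$ started at $a$ coincides with the Skorokhod reflection of $R$ below the level $a$ and the cumulative dividend process is the associated local time at $a$. Imposing $F_a'(a^-)=1$ on the intermediate identity and differentiating gives
\begin{equation*}
F_a(a)-G_w(a)=\frac{\Delta(a)}{\Delta'(a)}\bigl(1-G_w'(a)\bigr),
\end{equation*}
and plugging this back delivers \eqref{va}. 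The claimed continuous differentiability then follows from the explicit formula: $\Delta=e^{\alpha\cdot}W^\alpha$ is $\mathcal{C}^2$ on $(0,\infty)$ by Assumption 1, $G_w$ is $\mathcal{C}^1$ by Lemma \ref{Grepr}, the branch on $(a,\infty)$ is affine, and the value/derivative match at $a$ is precisely the smooth-fit condition just enforced.

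The principal difficulty is the rigorous justification of $F_a'(a^-)=1$. I would argue either probabilistically — identifying $(R^{\pi_a},L^{\pi_a})$ under $\Prob_{(a,l)}$ with the Skorokhod reflection of $R$ below $a$ and computing $F_a(a)$ directly from the known fluctuation identity for the expected discounted local time of a reflected spectrally negative Lévy process, cf.\ Loeffen \cite{Loeffen1} — or analytically, by verifying that the candidate $F_a$ satisfies the homogeneous part of the HJB equation on $(0,a)$ and exploiting the boundary behaviour of $W^\alpha$ at $a$ to force the $\mathcal{C}^1$-match.
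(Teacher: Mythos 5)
Your proposal is correct and follows essentially the same route as the paper: decompose at the first exit from $[0,a]$, use \eqref{exit1} and the strong Markov property to express $F_a$ through $\Delta$ and $\gqw$, and close the system with the derivative condition $F_a'(a)=1$. The paper supplies this last identity by citing Avram et al.\ \cite[eq.\ (5.12)]{APP2}, which is precisely the probabilistic justification you sketch via the reflected process and its local time at the barrier; note only that $F_a'(a)=1$ is a structural property of \emph{every} barrier level $a$ (not a smooth-fit optimality condition), a distinction your Skorokhod-reflection argument in fact respects.
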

\begin{proof}
See Section \ref{sec:proof}.
\end{proof}
Let
\[\hp(y):= \frac{1-\gqw^\prime(y)}{\Delta' (y)}\]
and define a candidate for the optimal dividend barrier by
\begin{equation*}
 a^*:=\sup\left\{a\geq0:\hp(a)\geq\hp(x)\text{ for all }x\geq0\right\},
\end{equation*}
where $\hp(0)=\lim\limits_{x\downarrow 0}\hp(x)$.\medskip
\\
Now, using the above two theorems we can give necessary and sufficient conditions
for the barrier strategy to be optimal.
\begin{theorem}\label{ver2}
The value function $V_{a^*}(x,l)=e^{-l}F_{a^*}(x)$ under the barrier strategy $\pi_{a^*}$ is in the domain of the full generator $\mathfrak A$.
The barrier strategy $\pi_{a^*}$ is optimal and $V_{a^*}(x,l)=V(x,l)$ for all $x\geq 0$ and $l\in\R$ if and only if
\begin{equation}\label{inq2}
 \mathfrak A [e^{-l}F_{a^*}(x)]\leq 0\quad \text{for all }x>a^*.
\end{equation}
 \end{theorem}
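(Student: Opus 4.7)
The plan is to combine the Verification Theorem (Theorem \ref{verthm}) with the explicit expression for $F_{a^*}$ from Theorem \ref{barthm}. First I would check that $V_{a^*}(x,l) = e^{-l} F_{a^*}(x)$ has the regularity needed for $\mathfrak{A}$ to be applied: Theorem \ref{barthm} provides $F_{a^*} \in \mathcal{C}^1([0,\infty))$; on $(0, a^*)$ the representation in \eqref{va}, combined with Assumption 1 ($W^\alpha \in \mathcal{C}^2(0,\infty)$, hence $\Delta \in \mathcal{C}^2$) and the regularity of $\gqw$ from Lemma \ref{Grepr}, yields $F_{a^*} \in \mathcal{C}^2$, while on $(a^*, \infty)$ the function is affine. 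The jump integral in $\mathfrak{A}$ converges thanks to the integrability hypotheses on $\nu$ in Section \ref{sec:model} and the fact that $w \in \mathcal{R}$.

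For the sufficiency direction I would verify the HJB equation \eqref{HJB} with boundary condition \eqref{eq:bc} for $V_{a^*}$ piecewise. For $x \in (0, a^*)$, $\Delta$ and $\gqw$ are defined via $e^{-Y}$-discounted exit quantities of the uncontrolled process, so Dynkin's formula under the tilted measure $\Prob^{(\alpha,-1)}$ shows that both satisfy $\mathfrak{A}[e^{-l}\,\cdot\,] = 0$; hence so does $V_{a^*}$. For the second entry of the max in \eqref{HJB} I need $F_{a^*}^\prime(x) \geq 1$ on $[0, a^*]$; differentiating \eqref{va} gives
\begin{equation*}
F_{a^*}^\prime(x) = \Delta^\prime(x)\, \hp(a^*) + \gqw^\prime(x).
\end{equation*}
Using $\hp(x) = (1 - \gqw^\prime(x))/\Delta^\prime(x)$ together with the maximality of $\hp(a^*)$ from the definition of $a^*$, one gets $\Delta^\prime(x)\hp(a^*) \geq \Delta^\prime(x)\hp(x) = 1 - \gqw^\prime(x)$, hence $F_{a^*}^\prime(x) \geq 1$, with equality at $x = a^*$. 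For $x > a^*$ the second entry of the max vanishes since $F_{a^*}^\prime \equiv 1$, while the first entry is non-positive precisely by the hypothesis \eqref{inq2}. The boundary condition \eqref{eq:bc} is inherited from $\gqw(x) = w(x)$ for $x < 0$, and ultimate domination by an affine function is immediate from the linear growth of $F_{a^*}$ beyond $a^*$ together with the factor $e^{-l}$. Theorem \ref{verthm} then yields $V_{a^*} = V$.

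For the necessity direction, assume $\pi_{a^*}$ is optimal, so $V = V_{a^*}$, and suppose for contradiction that $\mathfrak{A}[e^{-l_0} F_{a^*}](x_0, l_0) > 0$ at some $x_0 > a^*$. Consider the perturbed admissible strategy that, starting from $(x_0, l_0)$, refrains from paying any dividend either over a small deterministic interval or until $R$ first exits a small ball around $x_0$, whichever comes first, and then reverts to $\pi_{a^*}$. Applying Dynkin's formula to $V_{a^*}$ along this postponement converts the strict inequality $\mathfrak{A} V_{a^*} > 0$ into a strictly positive infinitesimal gain, producing an admissible strategy with expected return strictly larger than $V_{a^*}(x_0, l_0) = V(x_0, l_0)$, contradicting optimality.

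I expect the main obstacle to lie in the regularity at the barrier: $F_{a^*}$ is only $\mathcal{C}^1$ at $a^*$, and in the Brownian case $\sigma > 0$ the second derivative may be discontinuous there unless a smooth-fit principle holds. Showing that $V_{a^*}$ nevertheless lies in the domain of the \emph{full} generator $\mathfrak{A}$, and that $\mathfrak{A} V_{a^*}$ has a well-defined meaning across $x = a^*$ (so that both the Dynkin-type argument of the necessity direction and the application of Theorem \ref{verthm} go through), will require either a density/approximation argument or the martingale characterisation of the full generator in the sense of \cite{Kbook}, to bypass the drop in regularity at the barrier.
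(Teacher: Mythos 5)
Your proposal follows essentially the same route as the paper's proof: for sufficiency, the paper likewise argues that $e^{-Y_{t\wedge T}}\Delta(R_{t\wedge T\wedge\tau^+_{a^*}})$ and $e^{-Y_{t\wedge T}}G_w(R_{t\wedge T})$ are martingales (so $\mathfrak{A}[e^{-l}F_{a^*}(x)]=0$ for $x\le a^*$), invokes the choice of $a^*$ to get $F'_{a^*}\ge 1$, and then applies the Verification Theorem; for necessity, the paper constructs precisely the perturbed strategy $\tilde\pi$ you describe (pay nothing on an open interval $J\subset(a^*,\infty)$ where $\mathfrak{A}[e^{-l}F_{a^*}]>0$, then revert to $\pi_{a^*}$) and uses Optional Stopping / Dynkin to manufacture a strictly better return, a contradiction.

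On the obstacle you flag at the end: the $\mathcal C^2$-fit across $x=a^*$ does not require an approximation argument. For $a^*>0$, the first-order condition for $a^*$ as a maximizer of $\hp$ gives $\hp'(a^*)=0$, and differentiating $F_{a^*}''(x)=\Delta''(x)\hp(a^*)+\gqw''(x)$ together with $\hp'(a^*)=0$ yields $F_{a^*}''(a^*)=0$, which matches the affine piece; this is exactly the smooth-fit identity the paper records in Remark~\ref{Uwaga} ($\Delta''(a^*)=0$ in the no-penalty case) and is what, combined with Assumption~\eqref{condW1}, puts $V_{a^*}$ in the domain of the full generator. So the ingredient you anticipated is already supplied by the structure of the problem rather than requiring an additional density argument.
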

\begin{proof}
See p.\ \pageref{proof:2} in Section \ref{sec:proof}.
\end{proof}
\begin{theorem}\label{suffcond}
Suppose that
\begin{equation}\label{as1}
 \hp(a)\geq\hp(b)\quad\text{for all } a^*\leq a\leq b.
\end{equation}
Then the barrier strategy with the barrier $a^*$ is the optimal strategy, that is, $V(x,l)=e^{-l}F_{a^*}(x)$ for all $x\geq 0$.
\end{theorem}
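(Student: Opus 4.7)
The plan is to apply Theorem \ref{ver2}: since $\mathfrak A[e^{-l}F(x)]=e^{-l}\mathcal L F(x)$ for the integro-differential operator $\mathcal L$ appearing in the first slot of the max in \eqref{HJB2}, the inequality \eqref{inq2} reduces to showing $\mathcal L F_{a^*}(x)\le 0$ for every $x>a^*$. For fixed $x>a^*$, let $F_x$ be the barrier-at-$x$ return function from \eqref{va}; a standard Dynkin argument applied to the admissible strategy $\pi_x$ gives $\mathcal L F_x(y)=0$ on $(0,x)$, and by $C^2$ continuity from the left, $\mathcal L F_x(x-)=0$. Writing $U:=F_{a^*}-F_x$, linearity yields $\mathcal L F_{a^*}(x)=\mathcal L F_x(x+)+\mathcal L U(x+)$, and the monotonicity hypothesis \eqref{as1} will feed in twice, once for the sign of $U$ and once for the sign of $F_x''(x-)$.

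Reading off $U$ from \eqref{va}: $U\equiv 0$ on $(-\infty,0)$; on $[0,a^*]$, $U(y)=\Delta(y)[\hp(a^*)-\hp(x)]\ge 0$ by maximality of $a^*$; on $(a^*,x)$,
\[
U'(y)=1-\Delta'(y)\hp(x)-\gqw'(y)=\Delta'(y)\bigl(\hp(y)-\hp(x)\bigr)\ge 0
\]
by \eqref{as1}; and on $[x,\infty)$, $U$ is constant. Consequently $U\ge 0$ is non-decreasing, with $U'(x)=U''(x+)=0$ while $U''(x-)=-F_x''(x-)$. Regrouping the jump integral via the L\'evy--Khintchin identity
\[
\psi(0,-1)=\tfrac{\delta^2}{2}-p+\int_{\mathbb R^2}(e^{-v}-1+v\one_{[|v|\le 1]})\,\nu(\td u,\td v)
\]
and using $U'(x)=U''(x+)=0$ collapses $\mathcal L U(x+)$ to
\[
\mathcal L U(x+)=U(x)\,\psi(0,-1)+\int_{\mathbb R^2}e^{-v}\bigl(U(x+u)-U(x)\bigr)\,\nu(\td u,\td v).
\]
The first summand is $\le 0$ by Assumption 2 and $U(x)\ge 0$; the second is $\le 0$ because $U(x+u)\le U(x)$ for every $u$ ($U$ is non-decreasing and eventually constant) and $e^{-v}>0$. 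Thus $\mathcal L U(x+)\le 0$.

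The main obstacle is the residual term $\mathcal L F_x(x+)$: since $F_x$ is only $C^1$ at its own barrier $x$ (not $C^2$ unless $x=a^*$), one has $\mathcal L F_x(x+)=\mathcal L F_x(x-)-\tfrac{\sigma^2}{2}F_x''(x-)=-\tfrac{\sigma^2}{2}F_x''(x-)$, and it is not a priori clear that $F_x''(x-)\ge 0$. The trick is to differentiate the defining identity $\hp(y)\,\Delta'(y)=1-\gqw'(y)$:
\[
\tfrac{\td}{\td y}\hp(y)=-\frac{\gqw''(y)+\hp(y)\Delta''(y)}{\Delta'(y)},
\]
and observe that the numerator on the right is exactly $F_y''(y-)$ read off from \eqref{va}. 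Hence the non-increasing property of $\hp$ on $[a^*,\infty)$ provided by \eqref{as1} is precisely $F_y''(y-)\ge 0$ for $y\ge a^*$; applied at $y=x$ this gives $F_x''(x-)\ge 0$ and $\mathcal L F_x(x+)\le 0$. Combining, $\mathcal L F_{a^*}(x)=\mathcal L F_x(x+)+\mathcal L U(x+)\le 0$ for every $x>a^*$, so Theorem \ref{ver2} delivers the optimality of $\pi_{a^*}$ and the identity $V(x,l)=e^{-l}F_{a^*}(x)$.
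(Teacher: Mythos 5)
Your proof is correct and takes essentially the same approach as the paper: both proceed by applying Theorem~\ref{ver2}, decompose the operator at $x>a^*$ through the difference $U=F_{a^*}-F_x$, use the definition of $a^*$ on $[0,a^*]$ and hypothesis~\eqref{as1} on $[a^*,x]$ to show $U$ is non-negative and non-decreasing, invoke Assumption~2 for the sign of the zero-order term, spectral negativity of $R$ for the jump integral, and~\eqref{as1} once more to ensure $F_x''(x-)\ge 0$. The only packaging difference is that the paper works with $\lim_{y\uparrow x}\mathfrak{A}[(F_{a^*}-F_x)(y)e^{-l}]$ and argues by contradiction, whereas you split as $\mathcal{L}F_{a^*}(x)=\mathcal{L}F_x(x+)+\mathcal{L}U(x+)$ and verify~\eqref{inq2} directly; a small bonus of your write-up is that you make explicit why~\eqref{as1} forces $F_x''(x-)\ge 0$ (via $F_x''(x-)=-\hp'(x)\Delta'(x)$ and $\Delta'>0$), a step the paper merely asserts as item~(vii).
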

\begin{proof}
See p.\ \pageref{proof:3} in Section \ref{sec:proof}.
\end{proof}
\begin{corollary}\label{cor:explopt}
Assume that $w(x)=0$ (there is no penalty function) and that \eqref{eq:condW} holds true.
Then $\pi_{a^*}$ is the optimal strategy.
\end{corollary}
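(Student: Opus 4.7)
The plan is to invoke Theorem \ref{suffcond} by verifying its hypothesis \eqref{as1}. Since $w\equiv 0$ forces $\gqw\equiv 0$ and hence $\hp(y)=1/\Delta'(y)$, the inequality \eqref{as1} reduces to the single requirement that $\Delta'$ be non-decreasing on $[a^*,\infty)$.

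The main tool is Loeffen's theorem from \cite{Loeffen1}: under the completely-monotone-density condition \eqref{eq:condW} on $\mu_R$, the function $(W^\alpha)'-\Phi_R(0)W^\alpha$ is completely monotone, where $\Phi_R$ denotes the right-inverse of $\psi_R$. The preparatory step is to show that $\Phi_R(0)=0$: since $\beta\mapsto\psi(\beta,-1)$ is convex and satisfies $\psi(0,-1)<0=\psi(\alpha,-1)$ by Assumption \eqref{assump:2}, its derivative $\psi_\beta(\alpha,-1)=\psi_R'(0^+)$ must be strictly positive, forcing $\Phi_R(0)=0$. Bernstein's theorem then provides a positive Radon measure $\mu_*$ on $(0,\infty)$ such that
\[
(W^\alpha)'(y)=\int_0^\infty e^{-uy}\,\mu_*(\td u).
\]

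Writing $\Delta''(y)=e^{\alpha y}m(y)$ with $m(y):=\alpha^2 W^\alpha(y)+2\alpha(W^\alpha)'(y)+(W^\alpha)''(y)$ and substituting the Bernstein representation (together with its integrated form $W^\alpha(y)=W^\alpha(\infty)-\int_0^\infty e^{-uy}\mu_*(\td u)/u$), the cross-terms in $m'(y)$ telescope into a perfect square:
\[
m'(y)=\int_0^\infty(\alpha-u)^2 e^{-uy}\,\mu_*(\td u)\geq 0.
\]
Thus $m$ is non-decreasing on $(0,\infty)$; combined with $m(\infty)=\alpha^2 W^\alpha(\infty)>0$ and the observation that the zero of $m$ coincides with the global minimiser $a^*$ of $\Delta'$ (or $a^*=0$ if $m\geq 0$ throughout), this forces $m\geq 0$, hence $\Delta''\geq 0$, on $[a^*,\infty)$. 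Theorem \ref{suffcond} then delivers the optimality of $\pi_{a^*}$.

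The main obstacle, as I see it, is upgrading \eqref{eq:condW} from the mere convexity of $(W^\alpha)'$ to its full complete monotonicity via the identification $\Phi_R(0)=0$; this is where Assumption \eqref{assump:2} is indispensable and must be combined with the convexity of $\beta\mapsto\psi(\beta,-1)$. Once the Bernstein representation is in place, the perfect-square identity for $m'$ is a direct consequence and no further analytic work is required.
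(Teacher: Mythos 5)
Your proof is correct, and it invokes the same key external input as the paper (Loeffen's complete-monotonicity result for scale functions under \eqref{eq:condW}, funnelled through Theorem \ref{suffcond}), but the computational route is genuinely different. The paper writes $W^\alpha$ in Bernstein form $a+bx+\int_0^\infty(1-e^{-xt})\xi(\td t)$, differentiates $\Delta=e^{\alpha\cdot}W^\alpha$ three times, and argues that the resulting integrand $\alpha^3 e^{\alpha x}+(t-\alpha)^3 e^{-x(t-\alpha)}$ is non-negative (which requires a small case analysis for $t<\alpha$ versus $t\geq\alpha$); the conclusion is $\Delta'''>0$, i.e.\ strict convexity of $\Delta'$. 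You instead work with $(W^\alpha)'$ directly via its Laplace representation, set $m=e^{-\alpha\cdot}\Delta''=\alpha^2W^\alpha+2\alpha(W^\alpha)'+(W^\alpha)''$, and obtain the clean perfect-square identity $m'(y)=\int_0^\infty(\alpha-u)^2e^{-uy}\mu_*(\td u)\ge0$; combined with $m(\infty)=\alpha^2W^\alpha(\infty)>0$ this gives a single sign change of $\Delta''$ from $-$ to $+$ at the global minimiser $a^*$ of $\Delta'$, which is exactly \eqref{as1}. Your statement is formally weaker (monotonicity of $e^{-\alpha y}\Delta''(y)$ rather than of $\Delta''$ itself), but it suffices for Theorem \ref{suffcond} and avoids the case analysis in the paper's integrand. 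You are also more explicit about a point the paper glosses over: the applicability of Loeffen's result here with $q=0$ hinges on $\Phi_R(0)=0$, which you correctly trace to $\psi_R'(0^+)>0$, itself a consequence of Assumption \eqref{assump:2} and the convexity of $\beta\mapsto\psi(\beta,-1)$. That preparatory step is a genuine and worthwhile addition.
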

\begin{proof}
See p.\ \pageref{proof:4} in Section \ref{sec:proof}.
\end{proof}
\begin{remark}\label{Uwaga}\rm
If $w(x)=0$ for $x\leq 0$, that is, there is no penalty function, then
under the assumption that $f^\alpha$ is monotone decreasing, we have
\begin{equation}\label{valuefunctionmain}
V(x,l)=V_{a^*}(x,l)=e^{-l}F_{a^*}(x)=e^{-l}\cdot \frac{\Delta(x)}{\Delta'(a^*)}\;,\end{equation}
where $a^*$ maximises $H^\prime_\alpha(x)=1/\Delta'(x)$ hence solves
\[\Delta^{''}(a^*)=0\]
which is equivalent to the requirement that
\begin{equation}\label{second}
\frac{\md^2}{\md x^2}V(a^*,l)=0\;.\end{equation}
In other words, knowing the barrier strategy is optimal, identifying the value function
could be based on solving HJB equation \eqref{HJB2} (without any boundary conditions) and
finding $a^*$ via \eqref{second} and using the boundary condition $\frac{\md }{\md x}V(a^*,l)=e^{-l}$ or equivalently $F_{a^*}'(a^*)=1$.
\end{remark}
\section{Examples}\label{sec:examples}
In this section we pick up the idea of continuous dependence and flash crashes on the global market impacting both the exchange rate and the surplus of an insurance company. In the first example below we deal with the continuous dependency case, while Example 2 considers the flash crashes. By assuming that the jumps in the considered L\'evy processes are exponentially distributed we are able to rewrite the HJB equation in terms of an ordinary differential equation of order 3. In this case, we can show that the problem of finding the optimal barrier and the value function transforms in solving the underlying differential equation with corresponding boundary conditions.
\begin{example}\label{firstone}\rm
Let us first consider the following example. The classical model of risk theory describes the surplus of an insurance entity up to infinity.
We let $N_t$ be the jump number Poisson process with intensity $\lambda$, $c$ the premium rate, $C_i$ iid claim sizes Exp$(\gamma)$-distributed. We let the surplus be given by the perturbed classical risk model and the exponential expression of the exchange rate by a Brownian motion with drift.
\begin{align*}
&R_t=x+c t -\sum_{i=1}^{N_t}C_i+\sigma B_t\quad \mbox{and}\quad Y_t=l+pt+\delta W_t\;,
\end{align*}
where $B$ and $W$ are Brownian motions with correlation coefficient $\rho$. Further, in order to guarantee the well-posedness of our problem we assume $p>\frac{\delta^2}2$, confer Assumption \eqref{assump:2}.
\\In the following, we first derive the value function directly from the HJB equation and show in the second part the derivation of the value function and the optimal barrier via scale functions.
\medskip
\subsubsection*{Derivation of the value function via HJB.}
The HJB equation \eqref{HJB2} has the following form
\begin{align*}
\max\bigg\{c F'(x)+ \frac{\sigma^2}2 F''(x) +\lambda\int_{0}^x F(x-y)\md G(y)-\big(\lambda+p-\frac{\delta^2}2\big)& F(x)-\rho\delta\sigma F'(x),
\\&1-F'(x)\bigg\}=0\;,
\end{align*}
where $G(y)=1-e^{-\gamma y}$. Let $g(x):=\int_0^x F(y)e^{\gamma y}\md y$. Then
\begin{align*}
\lambda\gamma g(x)+\Big(\frac{\delta^2+\sigma^2\gamma^2}2-p-\lambda-\gamma c+\rho\gamma\delta\sigma\Big)g'(x)+\big(c-\delta\rho\sigma-\gamma &\sigma^2\big)g''(x)
\\&{}+\frac{\sigma^2}2 g'''(x)= 0\;.
\end{align*}
Let for the sake of clarity
\begin{align*}
&a_2:=\frac2{\sigma^2}\big(c-\delta\rho\sigma-\gamma\sigma^2\big),
\\&a_1:=\frac2{\sigma^2}\big(p+\frac{\delta^2}2-\lambda+\frac{\sigma^2\gamma^2}2-\gamma c+\rho\gamma\delta\sigma\big),
\\&a_0:=\frac{2\lambda\gamma}{\sigma^2}.
\end{align*}
Define
\[
P(s):=s^3+a_2s^2+a_1s+a_0.
\]
If $s_i$, $1\le i\le n$ are different zeros of $P(s)$ and $\lambda_i$ $1\le i\le n$ the corresponding multiplicities with $n\le 3$, then
due to \cite[p.\ 105]{kamke} or \cite{walter}, all solutions to the above differential equation are given by
\[
e^{s_1 x}P_{\lambda_1-1}(x)+...+e^{s_n x}P_{\lambda_n-1}(x)
\]
where $P_h$ is a polynomial of the degree $\le h$. Concerning the zeros of $P(s)$, we can distinguish between 2 cases: $P(s)$ has 3 real zeros, $P(s)$ has 1 real and 2 complex zeros (complex conjugates). In the second case, the general solution is $e^{s_1x}C_1+e^{s_2x}\sin(x)C_2+e^{s_2x}\cos(x)C_3$.\medskip
\\Considering again the equation
\[
c F'(x)+ \frac{\sigma^2}2 F''(x) +\lambda\int_{0}^x F(x-y)\md G(y)-\big(\lambda+p-\frac{\delta^2}2\big) F(x)-\rho\delta\sigma F'(x)=0\;,
\]
yields $F''(x)>0$ if $F'(x)=0$ and $F'''(x)>0$ if $F''(x)=0$ and $F'(x)>0$. Therefore, for an $a^*$ fulfilling $F''(a^*)=0$ and $F'(a^*)=1$ we have $F'(x)>1$ on $[0,a^*)$, i.e.\ $F$ fulfils the HJB equation.
To identify the optimal level $a^*$ note that by Remark \ref{Uwaga}
the boundary conditions are given by the following equations: $g(0)=0$, $g'(0)=0$, $F'(a^*)=1$ and $F''(a^*)=0$ for some $a^*$.
\begin{figure}[t]
\includegraphics[scale= 0.35, bb = 20 0 500 550]{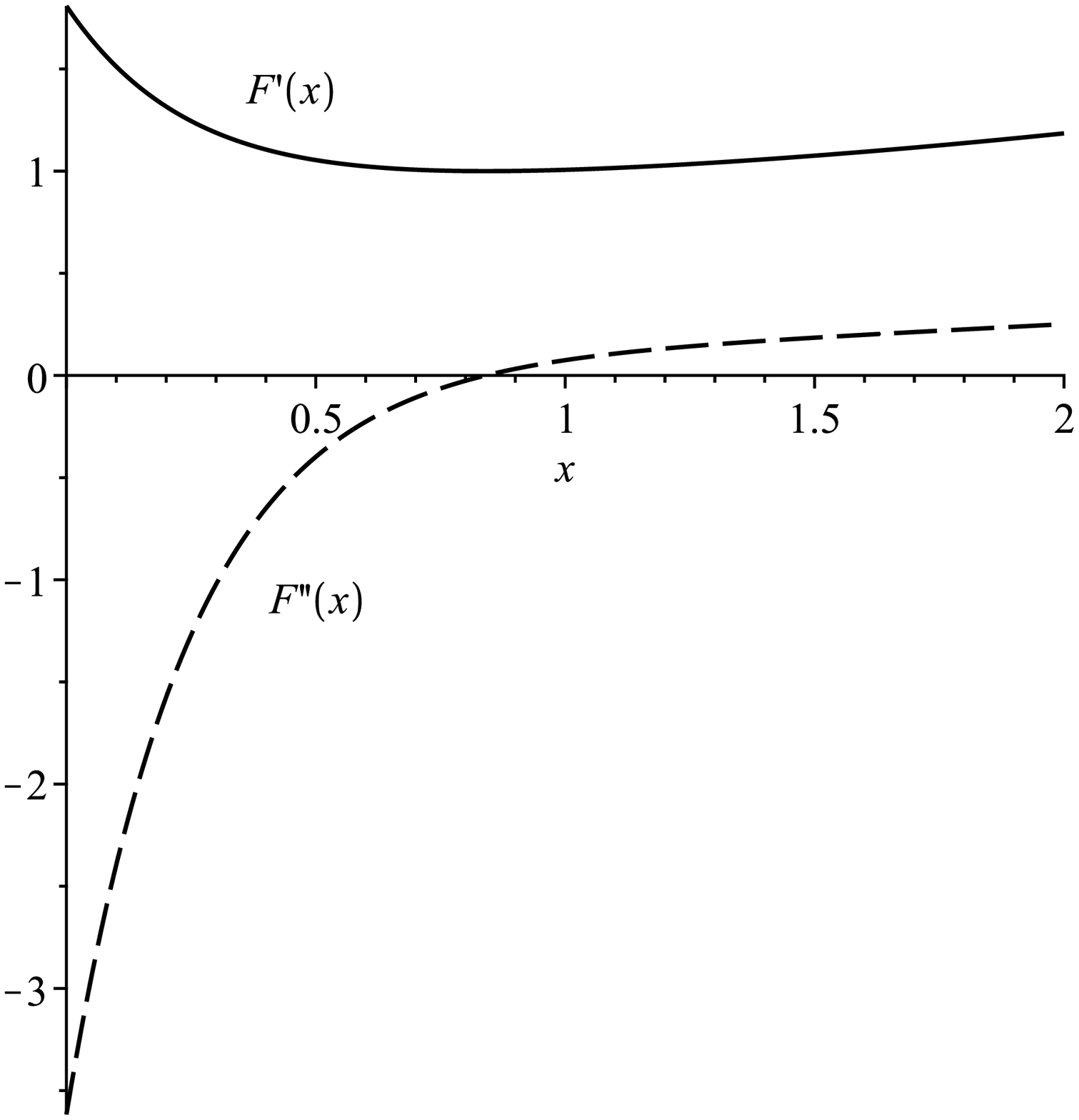}
\includegraphics[scale = 0.35, bb = -110 0 500 550]{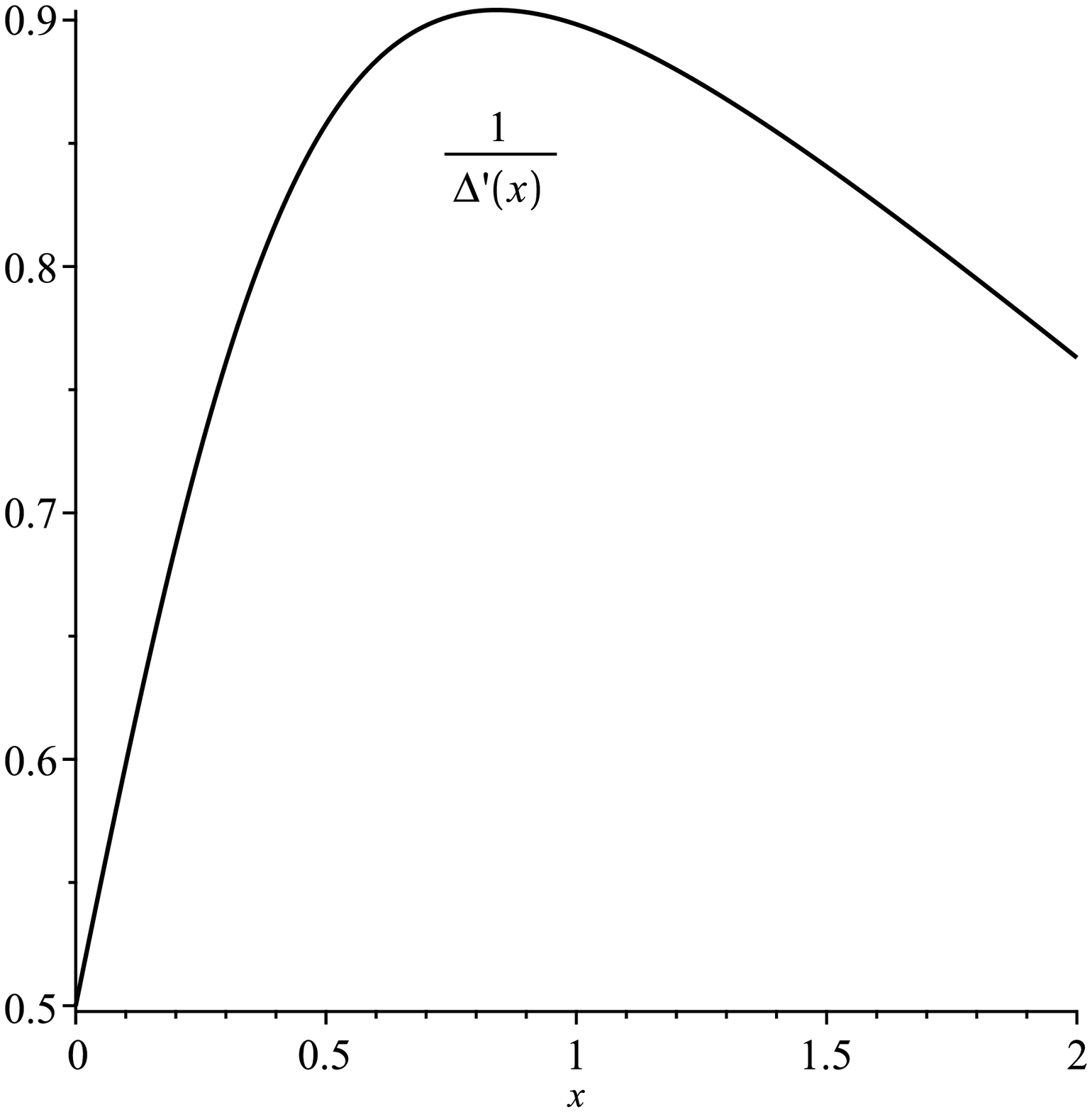}
\caption{The derivatives $F'(x)$ and $F''(x)$ on the interval $[0,2]$ (left picture) and $\frac 1{\Delta'(x)}$ (right picture) with the optimal barrier $a^*=0.840599$. \label{fig:1}}
\end{figure}
\smallskip
\\For instance, for $c= 1.3$, $p=0.6$, $\sigma = 1$, $\delta= 1$, $\rho =0.3$, $\lambda= 2$ and $\gamma= 2$. Then,
\[
g(x)=C_1e^{s_1 x}+C_2e^{s_2 x}+C_3e^{s_3 x},
\]
where $s_1= 1.697007$, $s_2= 2.327991$ and $s_3= -2.024999$. The boundary conditions yield the unique solution, (confer \cite[p.\ 199]{walter})
\[
g(x)=\Big(-6.898735\cdot e^{s_1x}+5.898735\cdot e^{s_2x}+e^{s_3x}\Big)0.111605\;.
\]
The solution to the HJB equation \eqref{HJB2}, $F(x)$, is then given by
\[
F(x)=g'(x)e^{-\gamma x}=-1.306593 e^{-0.302992x}+1.532594 e^{0.327991x}-0.226002 e^{-4.024999x}\;.
\]
The boundary conditions yield the optimal dividend barrier $a^*=0.840599$.
Figure \eqref{fig:1}, left picture, illustrates the first and the second derivatives of the value function $F(x)$, where we see that $F'(x)>1$ and $F''(x)<0$ for $x\in[0,a^*)$ and $F'(a^*)=1$, $F''(a^*)=0$.
\\Since the function $e^{-l}F(x)$ is twice continuously differentiable with respect to $x$, it can be shown using the standard methods, confer for instance \cite{Schmbook} that $e^{-l}F(x)$ is the value function.
\subsubsection*{Derivation via scale functions.}
Coming from the other side, using Theorem \ref{barthm} we can derive the value function and the optimal barrier via scale functions. First of all, find $\alpha \geq 0$ which sets the Laplace exponent of the bivariate L\'evy process $(R,Y)$ to zero:
\[
\psi(\alpha,-1)=(c-\rho\delta\sigma)\alpha+\frac{\sigma^2\alpha^2 }{2}+\lambda \Big(\frac{\gamma}{\gamma+\alpha}-1\Big)+\frac{\delta^2}{2}-p =0\;.
\]
Having identified $\alpha=0.32799143$ one can calculate the function $\psi_R(\beta)$ due to \eqref{tildepsi} (or by \eqref{newtriple}):
\[
\psi_R(\beta)=\psi(\beta+\alpha,-1)=(c+\alpha \sigma^2-\rho\delta\sigma)\beta+\frac{\sigma^2\beta^2}2-\frac{\lambda\gamma}{\gamma+\alpha}\cdot\frac{\beta}{\gamma+\alpha+\beta}\;.
\]
Now, using \eqref{eq:defW}, we can get $W^\alpha(x)$. Noting that the zeros of $\psi_R$ are given by $\tilde s_1=0$, $\tilde s_2=-0.630984$, $\tilde s_3=-4.352991$ and using the inverse Laplace transform, we get
\[
W^\alpha(x)= -0.249971 e^{-4.352991 x}-1.445168 e^{-0.630984 x}+1.695139\;.
\]
Therefore, we can conclude
\[
\Delta(x)=e^{\alpha x}W^\alpha(x)= -0.249971 e^{-4.024999x}-1.445168 e^{-0.302992x}+1.695139 e^{0.327991x}\;.
\]
By \eqref{newtriple} and the form of $\psi_R$ given above, the density $f^\alpha(y)=(\gamma+\alpha)e^{-(\gamma+\alpha) y}$ of the generic jump size $C$ of the surplus under $\Prob^{(\alpha,-1)}$ is completely monotone. 
Hence, from Corollary \ref{cor:explopt} the barrier strategy $\pi_{a^*}$ is optimal.
Due to \eqref{va}, the value function and the optimal strategy are given by
\begin{align*}
&V(x,l)=\begin{cases}
e^{-l}\frac{\Delta(x)}{\Delta'(a^*)}& \mbox{: $x\le a^*$}\;,\\
V(a^*,l)+x-a^* & \mbox{: $x> a^*$}\;,
\end{cases}
\\&a^*=\sup\left\{a\geq0:\frac1{\Delta'(a)}\geq \frac1{\Delta'(x)}\text{ for all }x\geq0\right\}=0.840599\;.
\end{align*}
In Figure \ref{fig:1}, the left picture illustrates that $\frac1{\Delta'(x)}$ has the global maximum at $0.840599$.
Since all assumptions of Corollary \ref{cor:explopt} are satisfied, the value function is given in \eqref{valuefunctionmain}
and it is consistent with the previous analysis.
\end{example}
\begin{example}\rm
In this example, we again assume that the surplus process of the considered insurance company $R_t$ is given by a perturbed classical risk model and the exponential of the exchange rate $Y_t$ by a continuous drift and a jump part, where the number of jumps is correlated with the number of jumps in the surplus.
Let
\begin{align*}
& R_t=x+c t +\sigma B_t-\s_{i=1}^{N_t}C_i \quad\mbox{and}\quad Y_t=l+p t-\s_{i=1}^{M_t} Z_i\;,
\end{align*}
where $B_t$ is a standard Brownian motion, $C_i$ describe the jumps in the surplus and $Z_i$ jumps in the exchange rate, where the sequences $(C_i)_{i\ge 1}$ and $(Z_i)_{i\ge 1}$ are independent. As we assume that the crashes are not severe, we let $C_i$ have the distribution function $G(x)=1-e^{-\gamma x}$ and the distribution function of $Z_i$ is $H(x)=1-e^{-\eta x}$, i.e. the jumps are not heavy-tailed. Let further $\bar N_t$ be a Poisson process with parameter $\bar\lambda$ independent of the Poisson process $M_t$ with parameter $\theta$. We let $N_t=\bar N_t+M_t$, i.e. $N_t$ is again a Poisson process with parameter $\lambda=\bar\lambda+ \theta$.
\\In order for the problem to be well-defined, Assumption \eqref{assump:2}, we require
\[
\psi(0,-1)=-p+\theta \frac{\eta}{\eta-1}-\theta<0\;.
\]
\subsubsection*{Derivation of the value function via HJB.}
In this  case, HJB equation \eqref{HJB2} (divided by $e^{-1}$) has the form
\begin{align*}
\max\Big\{cF'(x)+\frac{\sigma^2}2 F''(x)-p F(x)
+\int_{\R^2}e^{-h_2}F(x+h_1)-F(x)\;&\nu(\md h_1,\md h_2),
 1-F'(x)\Big\}=0\;.
\end{align*}
The integral in the above equation can be written as follows
\begin{align*}
\int_{\R^2}e^{-h_2}F(x+h_1)-F(x)\;\nu(\md h_1,\md h_2)&=\theta\int_0^\infty \int_0^x e^{z} F(x-y)\md G(y)\md H(z)
\\&\quad {}+\bar \lambda\int_0^x  F(x-y)\md G(y)-\lambda F(x)
\\&=\frac{\theta \gamma \eta e^{-\gamma x}}{\eta-1}\int_0^x F(y)e^{\gamma y}\md y
\\&\quad {}+\bar \lambda\gamma e^{-\gamma x}\int_0^x  F(y)e^{\gamma y}\md y-\lambda F(x)\;.
\end{align*}
\begin{figure}[t]
\includegraphics[scale= 0.35, bb = 20 0 500 500]{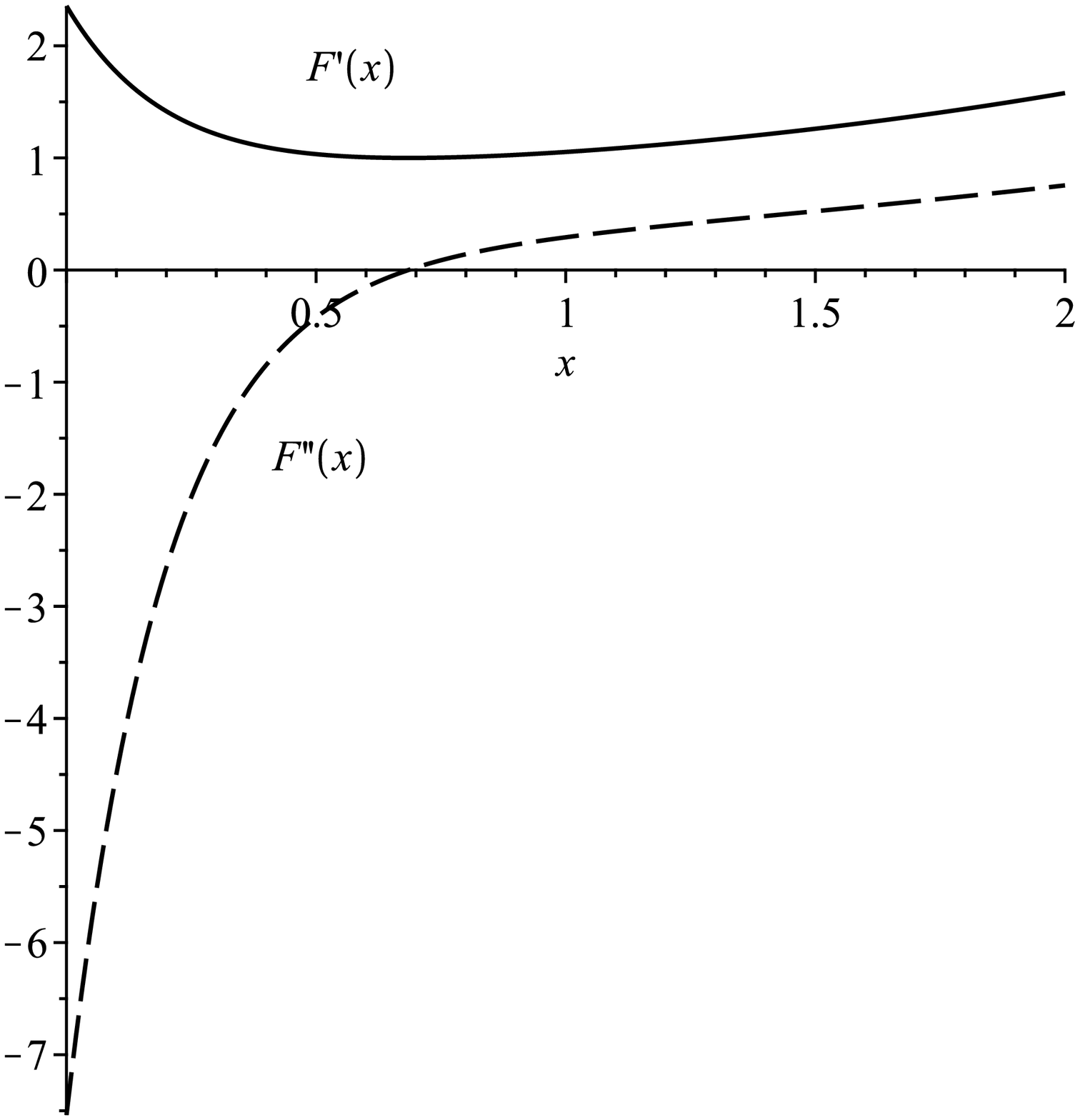}
\includegraphics[scale = 0.35, bb = -120 0 500 500]{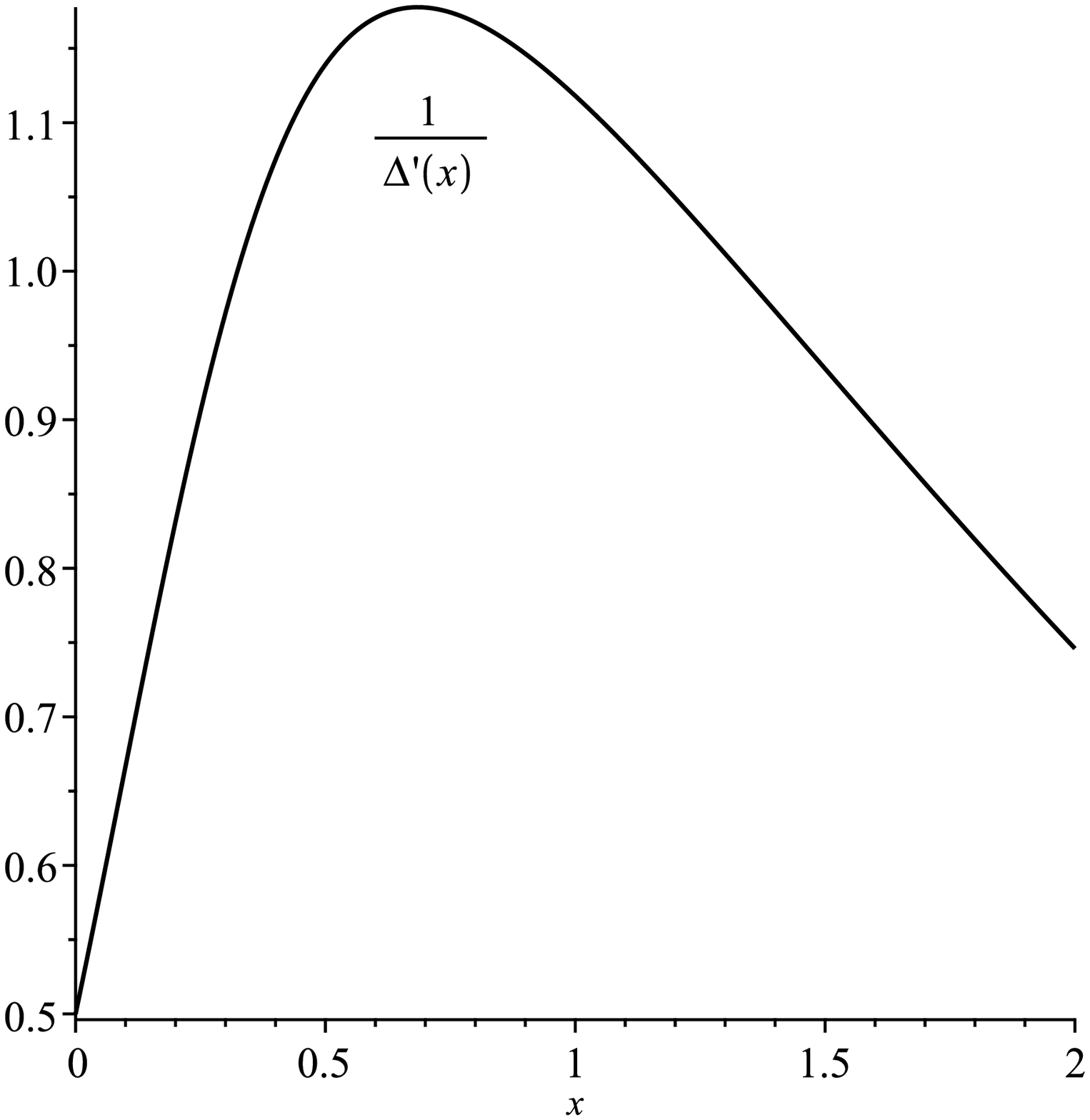}
\caption{The derivatives $F'(x)$ and $F''(x)$ on the interval $[0,2]$ (left picture) and $\frac 1{\Delta'(x)}$ (right picture) with the optimal barrier $a^*=0.684809$. \label{fig:2}}
\end{figure}
Like in the previous example, consider the differential equation with \\$g(x)=\int_0^x  F(y)e^{\gamma y}\md y$:
\begin{align*}
\Big(\frac{\theta\gamma \eta}{\eta-1}+\bar\lambda\gamma\Big) g(x)+\Big(\frac{\sigma^2\gamma^2}2-p-\lambda-\gamma c\Big)g'(x)+\big(c-\gamma\sigma^2\big)g''(x)+\frac{\sigma^2}2 g'''(x)= 0&\;.
\end{align*}
Let now $c = 1.6$, $p=0.6$, $\sigma= 1$, $\gamma = 2$, $\theta = 2$, $\bar \lambda := 0.5$, $\eta= 5$.
\\The solution is given by
\[
g(x)=\Big(-4.598667e^{s_1x}+3.598667e^{s_2x}+e^{s_3x}\Big)0.09447
\]
with $s_1=1.329911$, $s_2=2.557360$, $s_3=-3.087271$. The solution to the HJB is then given by
\[
F(x)=\Big(-4.598667s_1e^{(s_1-2)x}+3.598667s_2e^{(s_2-2)x}+s_3e^{(s_3-2)x}\Big)0.09447\;.
\]
The boundary conditions yield $a^*=0.684809$. Figure \ref{fig:2}, left picture, illustrates the derivatives $F'$ and $F''$ on the interval $[0,2]$: $F'(x)>1$ and $F''(x)<0$ on $[0, 0.684809)$.
\\Since the solution to the HJB, $e^{-1}F(x)$ is twice continuously differentiable with respect to $x$, using Ito's formula, confer \cite{Schmbook}, one can prove that $e^{-1}F(x)$ is indeed the value function.
\subsubsection*{Derivation via scale functions.}
Consider first the Laplace exponent of the bivariate L\'evy process $(R,Y)$. Find $\alpha \geq 0$ setting the Laplace exponent to zero:
\[
\psi(\alpha,-1)=\frac{\sigma^2\alpha^2 }{2}+c\alpha+\theta \Big(\frac{\gamma}{\gamma+\alpha}\cdot \frac{\eta}{\eta-1}-1\Big)+\bar\lambda \Big(\frac{\gamma}{\gamma+\alpha}-1\Big)-p =0\;.
\]
It holds $\alpha=0.557360$, leading to
\[
\psi_R(\beta)=\psi(\beta+\alpha,-1)=\frac{\sigma^2\beta^2 }{2}+\beta\big(c+\sigma^2\alpha\big)+\frac\gamma{\gamma+\alpha}\Big(\frac{\theta \eta}{\eta-1}+\bar\lambda\Big)\Big(\frac{\gamma+\alpha}{\gamma+\alpha+\beta}-1\Big)\;.
\]
Using inverse Laplace transform one gets due to \eqref{eq:defW} and Lemma \ref{Wrepr}
\[
\Delta(x)=e^{\alpha x}W^\alpha(x)= e^{0.557360x}\Big(-0.2476417475e^{-5.644632x}-0.490573e^{-1.22745x}+0.738215\Big)\;.
\]
From the representation of $\psi_R(\beta)$ above, we find that the density of the jumps in the surplus under the measure $\mP^{(\alpha,-1)}$ is given by $f^\alpha(x)=(\alpha+\gamma)e^{-(\alpha+\gamma)x}$. Since $f^\alpha$ is completely monotone, by Corollary \ref{cor:explopt} and Remark \ref{Uwaga} the optimal strategy is of barrier type.
\\The value function and the optimal strategy are
\begin{align*}
&V(x,l)=\begin{cases}
e^{-l}\frac{\Delta(x)}{\Delta'(a^*)}& \mbox{: $x\le a^*$}\;,\\
V(a^*,l)+x-a^* & \mbox{: $x> a^*$}\;,
\end{cases}
\\&a^*=\sup\left\{a\geq0:\frac1{\Delta'(a)}\geq \frac1{\Delta'(x)}\text{ for all }x\geq0\right\}=0.684809\;.
\end{align*}
The function $\frac1{\Delta'(x)}$ is illustrated in Figure \ref{fig:2}, right picture. And the achieved results are in line with the results derived via solving the HJB equation directly.
\end{example}
\section{Proofs}\label{sec:proof}
\subsection{Proof of Verification Theorem \ref{verthm}\label{proof:1}}
The proof is based on a
representation of $v$ as the pointwise minimum of a class of
``controlled'' supersolutions to the HJB equation.
We start with the observation that the value function satisfies the following dynamic programming equation.
\begin{lemma}\label{dynamic}
After extending $V$ to the negative half-axis by $V(x)=w(x)$ for $x<0$, we have,
for any stopping time $\tau$,
$$V(x,l)=\sup_{\pi\in \Pi} \E_x\bigg[  V(R^\pi_{\tau\wedge T},Y_{\tau\wedge T})+\int_0^{\tau\wedge T}e^{-Y_t} \,dL_t^\pi\bigg].
$$
\end{lemma}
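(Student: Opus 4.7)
The plan is a standard two-sided Bellman-principle argument: an upper bound via strong Markov decomposition, and a lower bound via concatenation with an $\varepsilon$-optimal continuation. Throughout I exploit that the bivariate Lévy process $X=(R,Y)$ is strong Markov at the stopping time $\tau\wedge T$, and that the extension $V(x,l)=e^{-l}w(x)$ for $x<0$ dovetails with the terminal Gerber--Shiu term on $\{\tau\geq T\}$.

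For the inequality $V(x,l)\leq\sup_\pi\E_{(x,l)}[\cdots]$, fix any $\pi\in\Pi$ and split
$$V_\pi(x,l)=\E_{(x,l)}\bigg[\int_0^{\tau\wedge T}e^{-Y_t}\md L^\pi_t\bigg]+\E_{(x,l)}\bigg[\int_{\tau\wedge T}^{T}e^{-Y_t}\md L^\pi_t+e^{-Y_T}w(R^\pi_T)\bigg].$$
On $\{\tau\geq T\}$ the second bracket equals $e^{-Y_T}w(R^\pi_T)=V(R^\pi_T,Y_T)$ by the boundary extension. On $\{\tau<T\}$, condition on $\mathcal F_\tau$ and invoke the strong Markov property: the shifted dividend stream $s\mapsto L^\pi_{\tau+s}-L^\pi_\tau$ is admissible for the process restarted at $(R^\pi_\tau,Y_\tau)$, hence the inner conditional expectation is bounded above by $V(R^\pi_\tau,Y_\tau)$. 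Combining the two events and taking $\sup_\pi$ gives the ``$\leq$'' direction.

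For the reverse inequality, fix $\varepsilon>0$ and, for each $(y,m)\in\R_+\times\R$, select an $\varepsilon$-optimal admissible $\pi^{y,m}$ with $V_{\pi^{y,m}}(y,m)\geq V(y,m)-\varepsilon$, chosen Borel-measurably in $(y,m)$. Given any $\pi\in\Pi$, define the concatenated strategy $\tilde\pi$ that follows $\pi$ on $[0,\tau\wedge T]$ and, on $\{\tau<T\}$, switches to $\pi^{R^\pi_\tau,Y_\tau}$ relative to the shifted time origin; this $\tilde\pi$ is admissible. The same strong-Markov conditioning then yields
$$V(x,l)\geq V_{\tilde\pi}(x,l)\geq\E_{(x,l)}\bigg[\int_0^{\tau\wedge T}e^{-Y_t}\md L^\pi_t+V(R^\pi_{\tau\wedge T},Y_{\tau\wedge T})\bigg]-\varepsilon,$$
so taking $\sup_\pi$ and letting $\varepsilon\downarrow 0$ closes the gap.

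The main obstacle is the measurable choice of $(y,m)\mapsto\pi^{y,m}$, which is what makes $\tilde\pi$ a genuine adapted control rather than a random family of controls. This can be handled by Bertsekas--Shreve measurable selection applied to the Polish space of cadlag adapted nondecreasing processes, or, more pragmatically, by first restricting $\sup$ to a countable dense subclass (e.g.\ piecewise-constant dividend rates with rational parameters) for which the selection is trivial, and then recovering the full supremum via an approximation argument that uses the integrability bound $V(x,l)\leq Ce^{-l}$ established right after Assumption~2. Everything else — Fubini interchanges with $e^{-Y_t}\md L^\pi_t$ and the strong-Markov identification of the shifted tail — is routine.
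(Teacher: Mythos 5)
Your argument is correct and is exactly the classical two-sided dynamic programming argument that the paper invokes by citing Azcue--Muler \cite{AM}; the paper itself gives no further detail beyond that citation, so you are in effect filling in the referenced proof. Your identification of measurable selection of $(y,m)\mapsto\pi^{y,m}$ as the real technical obstacle is accurate, and the workaround you sketch (either via a measurable-selection theorem or, more commonly in this literature, via a countable cover of the state space together with a local-Lipschitz/continuity estimate on $V$ so that a single $\varepsilon$-optimal strategy works on each cell) is precisely how Azcue--Muler and subsequent authors close that gap.
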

\begin{proof}
This follows by a straightforward adaptation of  classical arguments
(see  e.g. \cite[pp.~276--277]{AM}).
We will prove that $v$ is a supersolution to HJB equation \eqref{HJB}.
\end{proof}
\begin{lemma}\label{eq:Vpilemma}
The process
\begin{equation}\label{eq:Vpi}
V^\pi_t := V(R^\pi_{t\wedge T},Y_{t\wedge T}) +
\int_0^{t\wedge T} e^{-Y_s}\, dL^\pi_s
\end{equation}
is a uniformly integrable (UI) supermartingale.
\end{lemma}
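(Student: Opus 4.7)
The plan is to derive both the supermartingale inequality and uniform integrability from Lemma \ref{dynamic} combined with the strong Markov property of the bivariate L\'evy process $X=(R,Y)$ and the a priori bound on $V$ obtained just after \eqref{assump:2}.

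For the supermartingale property, fix $0\le s\le t$ and $\pi\in\Pi$, and work on the event $\{s<T\}$; on $\{T\le s\}$ the stopped process $V^\pi$ is constant and the inequality is trivial. By stationary and independent increments, the shifted process $\tilde X_u:=X_{s+u}-X_s$ is independent of $\mathcal F_s$ and has the law of $X$ under $\Prob_0$; moreover $\tilde L_u:=L^\pi_{s+u}-L^\pi_s$ defines an admissible dividend strategy $\tilde\pi$ for the restarted system issued from $(R^\pi_s,Y_s)$, with ruin time $\tilde T=T-s$. I would split
\[
V^\pi_t=\Bigl(V(R^\pi_{t\wedge T},Y_{t\wedge T})+\int_{s\wedge T}^{t\wedge T}e^{-Y_u}\,dL^\pi_u\Bigr)+\int_0^{s\wedge T}e^{-Y_u}\,dL^\pi_u,
\]
factor $e^{-Y_s}$ out of the first bracket using the additivity of $Y$, condition on $\mathcal F_s$, and apply Lemma \ref{dynamic} to the restarted system at the stopping time $(t-s)\wedge \tilde T$. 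The resulting inequality is exactly
\[
\E\bigl[V^\pi_t\mid\mathcal F_s\bigr]\le V(R^\pi_{s\wedge T},Y_{s\wedge T})+\int_0^{s\wedge T}e^{-Y_u}\,dL^\pi_u=V^\pi_s.
\]

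For uniform integrability, the plan is to dominate $|V^\pi_t|$ by an integrable random variable independent of $t$. The positive dividend integral is monotone in $t$ and bounded above by the total $\int_0^T e^{-Y_u}\,dL^\pi_u$, which has finite expectation because $V_\pi(x,l)\le V(x,l)<\infty$ and the penalty term $\E[e^{-Y_T}w(R^\pi_T)]$ is finite by $w\in\mathcal R$ and the identity \eqref{Gqw}. For the $V$-term, the a priori estimate established after \eqref{assump:2}, namely $V(x,l)\le (e^{-l}/a)\int_0^\infty e^{\psi(a,-1)u}\,du$ with $\psi(a,-1)<0$, yields $V(R^\pi_{t\wedge T},Y_{t\wedge T})\le C\,e^{-Y_{t\wedge T}}$ on $\{t\wedge T<T\}$; on $\{t\wedge T=T\}$ this term equals $e^{-Y_T}w(R^\pi_T)$, which is integrable by the same argument. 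Combining these yields an $L^1$-dominating envelope, hence UI.

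The main obstacle is the careful invocation of the strong Markov property at the time $s$ while simultaneously bookkeeping the dividends $L^\pi$ already paid on $[0,s]$, the ruin time $T$, and the factor $e^{-Y_s}$ that must be separated so as to match the template in Lemma \ref{dynamic}; once this decomposition is handled cleanly, the remaining computations are routine consequences of the admissibility of $\pi$ and the a priori estimate from Assumption~2.
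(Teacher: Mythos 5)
Your supermartingale argument takes a genuinely different route from the paper's, and it is essentially sound in spirit.  The paper introduces the conditional value process $W^\pi_s=\operatorname{ess\,sup}_{\tilde\pi\in\Pi_s}J_s^{\tilde\pi}$, proves the supermartingale property of $W^\pi$ via the upwards\mbox{-}directedness argument of Neveu (the family $\{J_t^{\tilde\pi}\}$ is directed so conditional expectation commutes with the essential supremum), and then separately identifies $W^\pi_s=V^\pi_s$ via the Markov restart.  You instead condition directly on $\mathcal F_s$, peel off the $\mathcal F_s$-measurable part of the dividend integral, factor $e^{-Y_s}$ out of the remaining bracket using the additivity of $Y$ and the $e^{-l}$-homogeneity of $V$, restart the bivariate L\'evy process at $(R^\pi_s,Y_s)$, and apply Lemma~\ref{dynamic} to the restarted problem.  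This is the more elementary ``Bellman in one step'' version of the same idea; what the paper's machinery buys is a cleaner handling of the measurability issue that, conditionally on $\mathcal F_s$, the continuation $\tilde L_u:=L^\pi_{s+u}-L^\pi_s$ is generally adapted to $\sigma(\mathcal F_s,\,\widetilde X_v:v\le u)$ rather than being a function of $(R^\pi_s,Y_s)$ alone, so that restarting requires care.  You gesture at this (``the main obstacle is the careful invocation of the strong Markov property \dots\ while \dots\ bookkeeping'') but do not actually resolve it.  The paper's $\Pi_s$-construction with continuations $\overline\pi$ that depend on the past only through $R^\pi_s$, plus the essential supremum argument, is precisely the device that makes this rigorous; if you want to keep the direct route, you need to say explicitly that conditional on each fixed $\omega\in\mathcal F_s$, the continuation is an admissible strategy started from $(R^\pi_s(\omega),Y_s(\omega))$, and then invoke Lemma~\ref{dynamic} $\omega$-by-$\omega$.

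The uniform integrability argument is where the genuine gap lies.  First, the a priori estimate you quote, $V(x,l)\le (e^{-l}/a)\int_0^\infty e^{\psi(a,-1)u}\,\md u$ uniformly in $x$, cannot be correct as stated: $V(x,l)\ge x\,e^{-l}$ (pay everything at once), so any uniform-in-$x$ bound of the form $C e^{-l}$ is contradicted for large $x$.  Tracing the computation after Assumption~2, the change of measure gives $\E_{(x,l)}[e^{aR_t-Y_t}\one_{[T^\pi>t]}]=e^{ax-l}e^{\psi(a,-1)t}\,\Prob^{(a,-1)}_x[T^\pi>t]$, so the bound actually carries a factor $e^{ax}$; alternatively, the paper's own bound used at this point is the affine one $V(x,l)\le(Ax+B)e^{-l}$.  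Second, and more importantly, even granting a bound of either form, you do not obtain a $t$-independent integrable envelope for the term $V(R^\pi_{t\wedge T},Y_{t\wedge T})$: the families $\{e^{-Y_{t\wedge T}}\}_{t\ge0}$ or $\{(A R_{t\wedge T}+B)e^{-Y_{t\wedge T}}\}_{t\ge0}$ are not dominated by a single integrable random variable without a further argument (a Doob-type maximal inequality or a higher-moment estimate on the bivariate L\'evy process), and you supply none.  So ``Combining these yields an $L^1$-dominating envelope, hence UI'' is unsupported for the $V$-part.  Note that the paper's own treatment here is also extremely terse (it essentially only shows $\sup_t\E[V^\pi_t]\le V(x,l)<\infty$, which is $L^1$-boundedness and not UI, and it records the affine bound \eqref{affinedomination} largely for later use); the full UI argument is implicitly delegated to \cite{APP2}.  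Your dividend-integral part, by contrast, is fine in outline, since that integral is monotone in $t$ and dominated by $\int_0^T e^{-Y_u}\,\md L^\pi_u$ --- though to see that this has finite expectation you need both $V_\pi\le V<\infty$ \emph{and} the integrability of the penalty $e^{-Y_T}|w(R^\pi_T)|$ for the controlled process, which does not follow directly from \eqref{Gqw} (that identity is for the uncontrolled $R$) and deserves a word of justification.
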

\begin{proof}
Fix arbitrary $\pi\in\Pi$, $x\geq 0$ and $s,t\geq 0$
with $s<t$.
The process $V^\pi_t$ is $\mathcal{F}_t$-measurable,
and is UI. Indeed, by Lemma \ref{dynamic} we have
\begin{equation*}
\E_{(x,l)}[V^\pi_t]\leq  \sup_{\pi\in\Pi}\E_{(x,l)}\bigg[ V(R^\pi_{t\wedge T},Y_{t\wedge T}) +
\int_0^{t\wedge T} e^{-Y_s} dL^\pi_s\bigg]=V(x,l).
\end{equation*}
Now by integration by parts, the non-positivity of $w$
and the no exogenous ruin assumption
\begin{equation}
V(x,l) \leq (A x+B)e^{-l},
\label{affinedomination}
\end{equation}
for some constants $A,B>0$.
\smallskip
\\Let $W^\pi_t$ be the following value process:
\begin{align}
& W_s^\pi := \operatornamewithlimits{ess\,sup}_{\tilde\pi\in\Pi_s}
J_s^{\tilde\pi}, \qquad J_s^{\tilde\pi} := \E\bigg[
\int_0^{T^{\tilde\pi}} e^{-Y_u}\,dL^{\tilde\pi}_u +
e^{-Y_{T^{\tilde\pi}}}w(R^{\tilde\pi}_{T^{\tilde\pi}})\bigg|\mathcal{F}_s\bigg], \label{eq:Wspi}\\
& \Pi_s: = \left\{\tilde\pi = (\pi,\overline{\pi}) = \{L^{\pi,\overline{\pi}}_u, u\geq 0\}:
\overline{\pi}\in\Pi\right\},\qquad
L^{\pi,\overline{\pi}}_u: =
\begin{cases}
L_u^\pi, & u\in[0,s[,\\
L_s^\pi + L_{u-s}^{\overline{\pi}}(R^\pi_s), & u\ge s,
\end{cases}\notag
\end{align}
where $L^{\overline{\pi}}(x)$ denotes the process of cumulative dividends of the strategy
$\overline{\pi}$ corresponding to the initial capital $x$.
\smallskip
\\
The fact that $V^\pi$ is a supermartingale is a direct
consequence of the following $\Prob$-a.s. relations:
\begin{itemize}
\item[(a)] $V_s^\pi = W_s^\pi$, \quad (b) $W_s^\pi\ge \E[W^\pi_t|\mathcal{F}_s]$, where $W^\pi$ is the process defined in (\ref{eq:Wspi}).
\end{itemize}

Point (b) follows by classical arguments,
since the family $\{J_t^{\tilde\pi}, \tilde\pi\in\Pi_t\}$ of random variables
is  upwards directed; see Neveu \cite{Neveu} and Avram et al. \cite[Lem. 3.1(ii)]{APP2} for details.

To prove (a), note that on account of
the Markov property of $R^\pi$ and $Y_t$ it also follows that
conditional on $R^\pi_s$,
$\{R^{\tilde\pi}_u-R^{\tilde\pi}_s, u\ge s \}$
is independent of $\mathcal{F}_s$. As a consequence,
the following identity holds on the set $\{s < T^{\tilde\pi}\}$:
\begin{multline*}
\E\bigg[\int_0^{T^{\tilde\pi}} e^{-Y_u}\,dL^{\tilde\pi}_u +
e^{-Y_{T^{\tilde\pi}}}
w(R^{\tilde\pi}_{T^{\tilde\pi}})\bigg|\mathcal{F}_s\bigg]\\
\begin{aligned}
&=
\E_{(R^{\pi}_s, Y_s)}\bigg[
\int_0^{T^{\overline{\pi}}} e^{-Y_u}\,dL^{\overline{\pi}}_u+
e^{-Y_{T^{\overline{\pi}}}}w(R^{\overline{\pi}}_{T^{\overline{\pi}}})\bigg]
+
\int_0^s e^{-Y_u}\,dL^{\pi}_s\\
&= V_{\overline{\pi}}(R^\pi_s,Y_s) + \int_0^s e^{-Y_u}\,dL^{\pi}_u,
\end{aligned}
\end{multline*}
and then we have the following representation:
$$
J_s^{\tilde\pi} =
V_{\overline{\pi}}(R^\pi_{s\wedge T},Y_{s\wedge T})
+ \int_0^{s\wedge T}e^{-Y_u}\,dL^{\pi}_u,
$$
which completes the proof on taking the essential supremum over the relevant family of strategies.
\end{proof}
To prove that the value function $V$ is a solution to the HJB equation \eqref{HJB},
we will denote by $\mathcal{G}$ the family of functions $g$  for which
\begin{equation}\label{eq:ggmartI}
M^{g,T_I}
:=\{g(R_{t\wedge T_I},Y_{t\wedge T_I}),\;
t\geq 0\}, \quad T_I := \inf\{t\ge0: R_t\notin I\},
\end{equation}
is a supermartingale for any closed interval $I\subset [0,\infty )$,
and such that
\begin{equation}\label{diffineq}
\frac{g (x,l)-g(y,l)}{x-y}\geq e^{-l}\quad \text{ for all }x>y \geq 0, \qquad g(x,l)\geq e^{-l}w(x) \quad\text{ for } x<0
\end{equation} and $g$ is ultimately dominated by some linear function.
\begin{lemma}\label{vHJB}
We have $V\in\mathcal{G}$.
\end{lemma}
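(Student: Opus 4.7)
The plan is to verify the four defining properties of $\mathcal{G}$ for $V$. Two of these are essentially immediate. The affine domination is the bound $V(x,l)\le (Ax+B)e^{-l}$ established as \eqref{affinedomination} in the proof of Lemma \ref{eq:Vpilemma}. The second half of \eqref{diffineq} holds with equality, since in Lemma \ref{dynamic} we extended $V$ to the negative half-axis by $V(x,l)=e^{-l}w(x)$ for $x<0$.

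For the first half of \eqref{diffineq}, I would fix $x>y\ge 0$ and $l\in \R$ and consider, from initial capital $x$, the strategy that pays the lump sum $x-y-\varepsilon$ at time $0$ and then follows an admissible strategy $\overline{\pi}$ starting from capital $y+\varepsilon$; for $\varepsilon>0$ small this is admissible because the admissibility condition $L^\pi_t-L^\pi_{t-}<R^\pi_{t-}$ is satisfied at $t=0$ since $y+\varepsilon>0$. The expected discounted reward of this strategy is $(x-y-\varepsilon)e^{-l}+V_{\overline{\pi}}(y+\varepsilon,l)$, using $Y_0=l$. Taking the supremum over $\overline{\pi}\in\Pi$ and then letting $\varepsilon\downarrow 0$ (using the trivial monotonicity of $V(\cdot,l)$ in its first argument to pass to the limit) yields
\[
V(x,l)\ge (x-y)e^{-l}+V(y,l),
\]
which is precisely the required inequality.

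The main step is the supermartingale property of $M^{V,T_I}$. Here the plan is to apply Lemma \ref{eq:Vpilemma} to the null strategy $\pi^{0}$ defined by $L^{\pi^{0}}\equiv 0$, for which $R^{\pi^{0}}=R$ and the associated process $V^{\pi^{0}}_t=V(R_{t\wedge T},Y_{t\wedge T})$ is a uniformly integrable supermartingale under each $\Prob_{(x,l)}$. For any closed interval $I\subset[0,\infty)$, the exit time $T_I=\inf\{t\ge 0:R_t\notin I\}$ satisfies $T_I\le T$ almost surely, since at ruin $R_T<0\notin I$. By the optional stopping theorem, the process stopped at $T_I$, namely $V(R_{t\wedge T_I},Y_{t\wedge T_I})=M^{V,T_I}_t$, is therefore itself a supermartingale, which is \eqref{eq:ggmartI}.

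The main obstacle I anticipate is the bookkeeping around optional stopping: confirming that the uniform integrability established in Lemma \ref{eq:Vpilemma} for the process stopped at $T$ carries over cleanly to the stronger stopping at $T_I$, and that $T_I\le T$ together with the measurability hypotheses hold uniformly across all closed intervals $I\subset[0,\infty)$ appearing in the definition of $\mathcal{G}$ (bounded as well as unbounded).
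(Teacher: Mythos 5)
Your proposal is correct and follows essentially the same route as the paper: supermartingale property via the null strategy and Lemma \ref{eq:Vpilemma}, the slope inequality in \eqref{diffineq} via an immediate lump-sum payment followed by a (near-)optimal continuation, and the growth bound via \eqref{affinedomination}. The $\varepsilon$-perturbation you add to guarantee strict admissibility of the lump sum at $y=0$ is a harmless refinement of the paper's shorter version (which invokes an $\epsilon$-optimal $\pi^\epsilon(y)$ without addressing that boundary case), and your worry about uniform integrability under the earlier stopping $T_I\le T$ is unnecessary: a right-continuous supermartingale stopped at any stopping time is again a supermartingale, so the passage from $V^{\pi^0}$ to $M^{V,T_I}$ is automatic.
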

\begin{proof}
Taking a strategy of not paying any dividends, by Lemma \ref{eq:Vpilemma} we find that
the process (\ref{eq:ggmartI}) with $g=v$ is a supermartingale.
We will show now that
\begin{equation*}
V(x,l)-V(y,l)\geq e^{-l}(x-y)\quad \text{for all $x>y\geq 0$ and $l\in\R$}.
\end{equation*}
Denote by
$\pi^\epsilon(y)$ an $\epsilon$-optimal strategy for the case $R^\pi_0=y$.
Then we take the strategy of paying $x-y$ immediately and
subsequently following the strategy $\pi^\epsilon(y)$ (note that such a strategy is admissible), so that the following holds:
$$
V(x,l) \ge (x- y)e^{-l} + V_{\pi^\epsilon}(y,l) \ge (x- y)e^{-l}  +V(y,l) - \epsilon\;.
$$
Since this inequality holds for any $\epsilon>0$, the stated lower
bound follows.
Linear domination of $v$ in $x$ by some affine function in $x$ follows from (\ref{affinedomination}).
\end{proof}
We now give the dual representations of the value function on a closed interval $I$.
Assume that $\mathcal{H}_I$ is a family of functions $k$
for which
\begin{equation*}
\WT M_t^{k,\pi}:=e^{-Y_{t\wedge \tau^\pi_I}}k(R^\pi_{t\wedge \tau^\pi_I},Y_{t\wedge \tau^\pi_I}) +
\int_0^{t\wedge \tau^\pi_I} e^{-Y_s} dL^\pi_s
\end{equation*} is an UI supermartingale
for $\tau^\pi_I:=\inf\{t\geq 0: R^\pi_t\notin I\}$
and $$k(x,l)\geq V(x,l)\qquad \text{ for }x\notin I.$$
Then
\begin{equation}
\label{repr1}
V(x,l) = \min_{k\in\mathcal{H}_I} k(x,l)\qquad\text{ for }x\in I.
\end{equation}
Indeed, let $\pi\in \Pi$, $k\in \mathcal{H}_I$ and $x\in I$.
Then the Optional Stopping Theorem applied to the UI Dynkin martingale yields
\begin{align*}
k(x,l) &\geq \lim_{t\to\infty}\E_{(x,l)}\bigg [ e^{-Y_{\tau_I^\pi\wedge
t}}k(R^\pi_{t\wedge\tau_I^\pi},Y_{t\wedge\tau_I^\pi}) + \int_0^{t\wedge\tau_I^\pi}e^{-Y_s}\,dL^\pi(s)  \bigg]\\
&\geq \E_{(x,l)}\bigg[e^{-Y_{\tau_I^\pi}}V(R^\pi_{\tau_I^\pi},Y_{\tau^\pi_I}) +
\int_0^{\tau_I^\pi}e^{-Y_s}\,dL^\pi(s)      \bigg],
\end{align*}
where the convention $\exp\{-\infty\}=0$ is used.\\
Taking the supremum over all $\pi\in\Pi$ shows that $k(x,l)\geq V(x,l)$.
Since $k\in\mathcal{H}_I$ was arbitrary, it follows that
$$\inf_{k\in\mathcal{H}_I}k(x,l)\geq V(x,l).
$$
This inequality is in fact an equality since $V$ is a
member of $\mathcal{H}_I$ by Lemma~\ref{eq:Vpilemma}.
The value function $V$ admits a more important representation from which the Verification Theorem
\ref{verthm} follows.
\begin{proposition}\label{repr2}
We have
\begin{equation*}
V(x,l) = \min_{g\in\mathcal G}g(x,l).
\end{equation*}
\end{proposition}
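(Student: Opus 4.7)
The inequality $V(x,l) \geq \min_{g \in \mathcal{G}} g(x,l)$ is immediate, since Lemma \ref{vHJB} shows $V \in \mathcal{G}$. The plan for the reverse inequality is to fix an arbitrary $g \in \mathcal{G}$ and show $g(x,l) \geq V_\pi(x,l)$ for every admissible $\pi \in \Pi$; the supremum over $\pi$ then gives $g \geq V$, hence $\min_{g \in \mathcal{G}} g \geq V$.

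The central technical step is to prove that, for every $\pi \in \Pi$, the process
\begin{equation*}
N_t^{g,\pi} := g\bigl(R^\pi_{t\wedge T},Y_{t\wedge T}\bigr) + \int_0^{t\wedge T} e^{-Y_s}\, dL^\pi_s
\end{equation*}
is a supermartingale under $\mathbb{P}_{(x,l)}$. To localize, I would introduce $T_n := \inf\{t\ge 0:R^\pi_t\ge n\}$ and work on the stopped process $N^{g,\pi}_{t\wedge T_n}$, relying on the $\mathcal{G}$-condition on the closed interval $I = [0,n]$ to control the contribution coming purely from the underlying L\'evy dynamics. The idea is that between times at which $L^\pi$ moves, the increments of $R^\pi$ coincide with those of $R$, so that the supermartingale property of $\{g(R_{t\wedge T_I},Y_{t\wedge T_I})\}$ transfers path-by-path to $g(R^\pi,Y)$.

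The impact of the dividend payments is then handled through the monotonicity condition \eqref{diffineq}. Decomposing $L^\pi = L^{\pi,c} + L^{\pi,d}$ into its continuous and pure-jump parts, at any jump time $t$ of $L^\pi$ the inequality $g(x,l)-g(y,l)\ge e^{-l}(x-y)$ for $x>y\ge 0$ gives
\begin{equation*}
g(R^\pi_t,Y_t) - g(R^\pi_{t-},Y_t) \le -e^{-Y_t}\bigl(L^\pi_t-L^\pi_{t-}\bigr),
\end{equation*}
while the same inequality applied in its infinitesimal (lower-Dini) form shows that the continuous decrease of $R^\pi$ driven by $L^{\pi,c}$ makes $g(R^\pi,Y)$ lose at least $e^{-Y_s}\,dL^{\pi,c}_s$. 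Combining these two estimates with the supermartingale property between dividend epochs yields that $\{N^{g,\pi}_{t\wedge T_n}\}_{t\ge 0}$ is a supermartingale. Applying Optional Stopping, letting first $n\to\infty$ and then $t\to\infty$, and invoking dominated/monotone convergence through the linear domination of $g$ and the integrability of $\int_0^T e^{-Y_s}dL^\pi_s$ (guaranteed by Assumption \eqref{assump:2}, as in the bound preceding the definition of $\mathcal{R}$), produces
\begin{equation*}
g(x,l) \ge \mathbb{E}_{(x,l)}\Bigl[g(R^\pi_T,Y_T) + \int_0^T e^{-Y_s}\,dL^\pi_s\Bigr] \ge V_\pi(x,l),
\end{equation*}
where the last step uses $g(x,l)\ge e^{-l}w(x)$ for $x<0$ together with $R^\pi_T<0$ at ruin.

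The principal obstacle is making the "supermartingale between dividend payments" argument rigorous: $g$ is not assumed to be $\mathcal{C}^2$, so Itô's formula is not directly available. I expect the cleanest route is to approximate $L^\pi$ by step processes (so that the dividend payments become finitely many discrete lump sums), handle those exactly via the jump estimate above stitched between standard supermartingale pieces of $g(R,Y)$ on $[0,n]$, and then pass to the limit using the monotonicity of $L^\pi$ and the dominated convergence argument allowed by the linear domination built into $\mathcal{G}$. Alternatively, one may deduce the result from the interval representation \eqref{repr1} by showing that any $g\in\mathcal{G}$ belongs to $\mathcal{H}_{[0,n]}$ for all $n$ (after replacing $g$ outside $[0,n]$ by $V$), but the estimates needed are essentially the same.
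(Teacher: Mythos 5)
Your proposal is correct and follows essentially the same route as the paper: one inclusion is immediate from $V\in\mathcal{G}$ (Lemma \ref{vHJB}), and the reverse is obtained by proving that $g(R^\pi_{t\wedge T},Y_{t\wedge T})+\int_0^{t\wedge T}e^{-Y_s}\,dL^\pi_s$ is a UI supermartingale for every $g\in\mathcal{G}$ and $\pi\in\Pi$, controlling the dividend increments via the monotonicity condition \eqref{diffineq} and the free evolution via the $\mathcal{G}$-supermartingale property, then applying optional stopping. The paper implements precisely the "alternative" you mention -- showing $\mathcal{G}\subset\mathcal{H}_{[0,\infty)}$ and invoking \eqref{repr1} -- and, like you, circumvents the lack of $\mathcal{C}^2$ regularity of $g$ by approximating $L^\pi$ with step processes on a dyadic time grid rather than splitting $L^\pi$ into continuous and jump parts.
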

\begin{proof}
Since $v\in\mathcal{G}$ in view of Lemma \ref{vHJB}, by (\ref{repr1}) it suffices to prove that $\mathcal{G}\subset \mathcal{H}_{[0,\infty)}$.
The proof of this fact is  similar to the proof of the shifting lemma \cite[Lem. 5.5]{APP2}.
For completeness, we give  the main steps.
Fix arbitrary $g\in \mathcal{G}$, $\pi\in\Pi$ and $s,t\geq 0$ with $s< t$.
Note that $\WT M^{g,\pi}$ is adapted
and UI by the linear growth condition and arguments  in the proof of Lemma \ref{eq:Vpilemma} and by \cite[Sec. 8]{APP2}.
Furthermore, the following (in)equalities hold true:
\begin{equation*}
\E\big[\WT M_t^{g,\pi}\big|\mathcal{F}_{s\wedge T}\big] \stackrel{(a)}{=} \lim_{n\to\infty}
\E\big[\WT M^{g,\pi_n}_t\big|\mathcal{F}_{s\wedge T}\big] \stackrel{(b)}{\leq} \lim_{n\to\infty}
\WT M^{g,\pi_n}_{s\wedge T} \stackrel{(c)}{=}
\WT M_{s\wedge T}^{g,\pi}
\stackrel{(d)}{=}
\WT M^{g,\pi}_s,
\end{equation*}
where the sequence $(\pi_n)_{n\in \mathbb{N}}$ of strategies is defined
by $\pi_n=\{L^{\pi_n}_t, t\geq 0\}$ with $L_0^{\pi_n} = L_0^\pi$ and
\begin{align*}
L^{\pi_n}_u &:=
\begin{cases}
\sup\{L^\pi_{v}: v< u, v\in \mathbb{T}_n\}, & 0<u < T, \\
L^{\pi_n}_{T-}, & u\ge T,
\end{cases}\\
 \mathbb{T}_n &:=\left(\left\{t_k:=s+(t-s)\frac{k}{2^n}, k\in
\mathbb{Z}\right\}\cup\{0\}\right)\cap \mathbb{R}_+,
\end{align*}
where the above $T$ is calculated for the strategy $\pi$.
Since $s$ and $t$ are arbitrary,
it  follows that $\WT M^{g,\pi}$ is a supermartingale, which will complete the proof.

Points (a), (c) and (d) follow from the Monotone and Dominated Convergence Theorems.
To prove  (b), let $T_i:=T\wedge t_i$,
 denote $ \WT M^{g,\pi_n}=M$,
$ L^{\pi_n}=L$ and observe that
\begin{align*}
&M_t - M_s =  \sum_{i=1}^{2^n} Q_i + \sum_{i=1}^{2^n}{Z_i},\quad\text{with}\\
&Q_i := g\left(R^{}_{T_i-},Y_{T_i}\right) - g(R^{}_{T_{i-1}},Y_{T_{i-1}}),\\
&Z_i:=\left(g(R^{}_{T_i},Y_{T_i}) - g(R^{}_{T_i-} ,Y_{T_i}) + \Delta L_{T_i}\right)\one_{[\Delta L_{T_i}>0]}.
\end{align*}
The strong Markov property of $R$ and $Y$ and
the  definition of $R^\pi$ imply
\begin{align}
  \E\big[
g(R^{}_{T_{i-}},Y_{T_i}) - g(R^{}_{T_{i-1}},Y_{T_{i-1}}) \big|\mathcal{F}_{T_{i-1}}\big]
=
\E_{(R_{T_{i-1}},Y_{T_{i-1}})}[g(R^{}_{\tau_i},Y_{\tau_i}) -
g(R^{}_0,Y_0)],  \label{eq:YEDOOB}
\end{align}
with $\tau_i := T_i\circ\theta_{T_{i-1}}$, where $\theta$ denotes the shift operator.
The right-hand side of \eqref{eq:YEDOOB}
is  non-positive because $g\in \mathcal{G}$.
Furthermore, it follows from \eqref{diffineq}
 that all the $Z_i$ are
non-positive. The tower property of conditional
expectation then yields
$$
\E[M_t- M_s\,|\,\mathcal{F}_s] \leq 0.
$$
This establishes inequality (b) and the proof is complete.
\end{proof}
Finally, we are ready to prove the verification theorem.
\subsubsection*{Proof of Verification Theorem \ref{verthm}.}
Since $V_\pi$ is twice continuously differentiable
and dominated by an affine function, the function
$h(x,l):=V_\pi(x,l)$  is in the domain of the extended generator of~$X=(R,Y)$. This means that the
process
$$V_\pi(R_{t\wedge T_I},Y_{t\wedge T_I})e^{-\int_0^{t\wedge T_I} \frac{\mathfrak{A} h(X_s)}{h(X_s)}\md s}
$$
is a martingale for any closed interval $I\in[0,\infty )$. By \eqref{HJB} it follows that
$\frac{\mathfrak{A}h(X_s)}{h(X_s)}\leq 0$ and hence $V_\pi\in\mathcal{G}$, which completes the proof.
\hfill\mbox{$\square$}
\subsection{On the return function for a barrier strategy}
\subsubsection*{Proof of Theorem \ref{barthm}.}
Note that for the barrier strategy until the first hitting of the barrier $a$, the regulated process $R^{\pi_a}$ behaves like the process $R$. By the strong Markov property of $R_t$ and by (\ref{exit1}) for $x\in[0,a]$ we have
$$
V_a(x,l)=
\frac{\Delta (x)}{\Delta (a)} V_a(a,l) + \E_{(x,l)}\Big[e^{-Y_{\tau^-_0}}w(R_{\tau_0^-})\one_{[ \tau_0^-<\tau^+_a]}\Big].
$$
Moreover, again using the strong Markov property and \eqref{exit1} we can derive
$$\E_{(x,l)}\Big[e^{-Y_{\tau^-_0}}w(R_{\tau_0^-})\one_{[ \tau_0^-<\tau^+_a]}\Big]=\Big(\gqw (x)-\gqw(a)\frac{\Delta (x)}{\Delta (a)}\Big)e^{-l}\;.$$
Hence
\begin{equation*}
V_a(x,l)=
\frac{\Delta (x)}{\Delta (a)}\big( V_a(a,l)-e^{-l}\gqw (a)\big)+e^{-l}\gqw (x).
\end{equation*}
Note that $L_t^{\pi_a}=(\sup_{s\leq t} R_s -a)\vee 0$. Thus using the classical arguments for the L\'evy dividend problem,
(see e.g. Avram et al. \cite[eq. (5.12)]{APP2}) it follows that
\begin{equation*}
\frac{\md}{\md x}V_a(a,l)=e^{-l},
\end{equation*}
from which the assertion of Theorem \ref{barthm} immediately follows.
\hfill\mbox{$\square$}
\subsection{Proofs of necessary and sufficient conditions for optimality of a barrier strategy\label{proof:2}}
\subsubsection*{Proof of Theorem \ref{ver2}.}
\noindent
To prove sufficiency, we need to show that $V_{a^*}$ satisfies the conditions of the Verification Theorem \ref{verthm}.
From Theorem \ref{barthm} it follows that $V_{a^*}$ is ultimately linear and by Assumption \eqref{condW1} 
is twice continuously differentiable.
Moreover, by the choice of the optimal barrier $a^*$ we know that $V_{a^*}^\prime (x)\geq 1$.
Finally,
by definition of  $\Delta$ and $G_{w}$ in \eqref{exit1} and \eqref{Gqw} respectively, and the strong Markov property of the risk process $R$ it follows that
$$
e^{-Y_{t\wedge T}}\Delta(R_{t\wedge T\wedge \tau^+_{a^*}}),\qquad e^{-Y_{t\wedge T}}G_{w}(R_{t\wedge T}) $$ are martingales. Hence
$$e^{-Y_{t\wedge T}}F_{a^*}(R_{t\wedge T \wedge \tau^+_{a^*}})$$ is a martingale.
This means that $F_{a^*}$ is in the domain of the full generator of $R$ stopped on exiting $[0,a^*]$ and
that $\mathfrak{A} (F_{a^*}(x)e^{-l})= 0$ for $x\leq a^*$ and $l\in\mathbb{R}$.
The remaining part of HJB equation follows from assumption \eqref{inq2}.
\smallskip
\\
To prove necessity we assume that condition (\ref{inq2}) is not satisfied. By the continuity of the function $x\mapsto \mathfrak{A}(F_{a^*}(x)e^{-l})$ there exists an open and bounded interval $\mathrm{J}\subset(a^*,\infty)$ such that $\mathfrak{A} (F_{a^*}(x)e^{-l})> 0$ for all $x\in \mathrm{J}$. Let $\tilde{\pi}$ be the strategy of paying nothing if the reserve process $R^{\tilde{\pi}}$ takes a value in $\mathrm{J}$, and following the strategy $\pi_{a^*}$ otherwise. If we extend $V_{a^*}$ to the negative half-axis by $F_{a^*}(x)=w(x)$ for $x<0$, we have
\begin{equation*}
 V_{\tilde{\pi}}(x,l) =
\begin{cases}
\E_{(x,l)}[e^{-Y_{T_\mathrm{J}}}F_{a^*}(R_{T_\mathrm{J}})], & x\in \mathrm{J}, \\
e^{-l}F_{a^*}(x), & x\not\in \mathrm{J},
\end{cases}
\end{equation*}
where $T_\mathrm{J}$ is defined by (\ref{eq:ggmartI}).
\smallskip
\\By the Optional Stopping Theorem applied to the process $e^{-Y_t}F_{a^*}(R_t)$, for all $x\in \mathrm{J}$, we obtain
\begin{equation*}
V_{\tilde{\pi}}(x,l)=\E_{(x,l)}[e^{-Y_{T_\mathrm{J}}}F_{a^*}(R_{ T_\mathrm{J}})]=e^{-l}F_{a^*}(x)+\E_{(x,l)}\bigg[\int_0^{ T_\mathrm{J}}\mathfrak{A} (F_{a^*}(R_s)e^{-Y_s})\md s\bigg]> e^{-l}F_{a^*}(x).
\end{equation*}
This leads to a contradiction and consequently proves the optimality of the strategy $\pi_{a^*}$.
\hfill\mbox{$\square$}
\subsubsection*{Proof of Theorem \ref{suffcond}.\label{proof:3}}
In the first step, we will show that
\begin{equation}\label{pom1}
\lim_{y\uparrow x} \,\mathfrak{A}[(F_{a^*}-F_x)(y)e^{-l}]\leq 0\qquad\text{for all }x>a^*\;, l\in\R\;.
\end{equation}
Let $x>a^*$. By the Dominated Convergence Theorem we obtain
\begin{align*}
 \lim_{y\uparrow x}\,\mathfrak{A}[(F_{a^*}-F_x)(y)e^{-l}]
&=e^{-l}\Big\{c-\rho\delta\sigma-\int_{\R^2}h_2\one_{[|h_2|\le 1]}\;\nu(\td h_1,\td h_2)\Big\}\big(F'_{a^*}-F'_x\big)(x)
\\&{}-e^{-l}\Big\{p-\frac{\delta^2}2+\int_{\R^2}1-h_1\one_{[|h_1|\le 1]}\;\nu(\td h_1,\td h_2)\Big\}\big(F_{a^*}-F_x)(x)\big)
\\
&{} +e^{-l}\int_{\R^2} e^{-h_1}\big[(F_{a^*}-F_x)(x+h_2)\big]\;\nu(\td h_1,\td h_2).
\\&+e^{-l}\frac{\sigma^2}{2}\big(F^{''}_{a^*}(x)-\lim_{y\uparrow x}F^{''}_x(y)\big)\;.
\end{align*}
By \eqref{va} we have for $x>a^*$:
\begin{itemize}
 \item[i.] $(F'_{a^*}-F'_x)(x)=0$.
\item[ii.]  $(F'_{a^*}-F'_x)(b)=\Delta (b) \left(\hp(a^*)-\hp(x)\right)\geq0$ for $b\in [0,a^*]$ by the definition of $a^*$.
\item[iii.]  $(F'_{a^*}-F'_x)(u)=\Delta(u)\left(\hp(u)-\hp(x)\right)\geq0$  for $u\in [a^*,x]$ by the assumption \eqref{as1}.
\item[iv.] $(F_{a^*}-F_x)(a^*)\geq0$, thus by iii, $(F_{a^*}-F_x)(x)\geq0$.
\item[v.] $(F_{a^*}-F_x)(x+z)\leq(F_{a^*}-F_x)(x)$ for all $z\leq0$ by ii and iii.
\item[vi.] Assumption \eqref{assump:2} yields $-p+\frac{\delta^2}2+\int_{\R^2}-1+h_1\one_{[|h_1|\le 1]}\;\nu(\td h_1,\td h_2)<-\int_{\R^2}e^{-h_1}\;\nu(\td h_1,\td h_2)$.
\item[vii.] If $\sigma>0$ then by our assumption \eqref{as1} we have $\lim_{y\uparrow x}F^{''}_x(y)\geq 0=F^{''}_{a^*}(x)$.
\end{itemize}
Thus we have shown \eqref{pom1}.
\\
Now assume that \eqref{inq2} does not hold. Then there exists an $x>a^*$ such that $$\mathfrak{A} (F_{a^*}(x)e^{-l})>0\;. $$ By the continuity of $\mathfrak{A} (F_{a^*}e^{-l})$ we deduce that $\lim_{y\uparrow x}\mathfrak{A}(F_x(y)e^{-l})>0$, which contradicts \eqref{pom1}.
\hfill\mbox{$\square$}
\subsubsection*{Proof of Corollary \ref{cor:explopt}.\label{proof:4}}
\noindent
It is well known that the scale function of a spectrally negative L\'evy
process which does not go to minus infinity is equal (up to a multiplicative
constant appearing in the local time) to the renewal function of the
descending ladder height process. Following \cite{Loeffen1} and Assumption \eqref{eq:condW} we
conclude that $W^\alpha (x)$ is completely monotone, confer the footnote on p.\ \ref{foot} for definition, and, as it is non-negative, it is also a Bernstein function.
Thus (see, e.g., \cite{20}, Chapter 3.9):
\[W^\alpha(x) = a+ bx +
\int_0^\infty (1-e^{-xt})\;\xi(\td t),\quad x > 0,\]
where $a, b \geq 0$ and $\xi$ is a measure on $(0,\infty)$ satisfying
integrability condition:
$$\int_0^\infty (t \wedge 1)\;\xi(\td t) <\infty\;.$$
From Lemma \ref{Wrepr} it follows that
\begin{equation*}
\Delta(x) =\left( e^{\alpha x}(a + bx)+
\int_0^\infty
\left(e^{\alpha x} -e^{-x(t-\alpha)}\right)\;\xi(\td t)\right).
\end{equation*}
By repeatedly using the dominated convergence theorem, we can now deduce
\begin{eqnarray*}
\Delta'''(x) =
\left[g'''(x)+
\int_0^\infty
\left(\alpha^3 e^{\alpha x}x +(t-\alpha)^3 e^{-x(t-\alpha)}\right)\xi(\td t)\right],
\end{eqnarray*}
where $g(x) = e^{\alpha x}(a+bx)$. Hence $\Delta'''(x)>0$ for all $x>0$ and so $\Delta'(x)$ is strictly convex on $(0,\infty)$. We can
now apply Theorem \ref{suffcond} to deduce that the barrier strategy at $a^*$ is optimal.
\section*{Conclusions}\label{sec:remarks}
In this paper, we sought to maximise the amount of expected dividends paid in a foreign currency for dependent L\'evy risk processes as a surplus and exchange rate. We found some sufficient and necessary conditions for a constant barrier strategy to be optimal.
\\It would be interesting to consider a more general exchange rates $Y_t$, for examples ones governed by stochastic differential equations. In Section Examples, we demonstrated how the optimal strategy and the value function can be found through direct solution of the HJB equation (classical method) and via scale functions (the method presented in this paper). Of course, solution via classical methods by guessing a twice continuously differentiable function solving the HJB equation, can be applied in just few cases by dealing with L\'evy processes, for instance if the jumps are assumed to be exponentially distributed. In the remaining cases one has to rely on the method presented in this paper.
\subsubsection*{Acknowledgements}
Julia Eisenberg has been supported by the Austrian Science Fund, FWF, V 603-N35.\smallskip
\\Zbigniew Palmowski has been partially supported by the National Science Centre under the grant 2016/23/B/HS4/00566.

\end{document}